\newcommand{\abs}[1]{\ensuremath{\mathopen\lvert #1 \mathclose\rvert}}
\newcommand{\NN}{\ensuremath{\mathbb{N}}}
\newcommand{\RR}{\ensuremath{\mathbb{R}}}
\newcommand{\eps}{\varepsilon}
\newcommand{\e}{\mathbb{E}}
\newcommand{\q}{\mathbf{q}}
\newcommand{\p}{\mathbf{p}}
\newcommand{\w}{\mathbf{w}}
\newcommand{\I}{\mathcal{I}}
\newcommand{\G}{\mathcal{G}}
\newcommand{\In}{\mathcal{I}(\mathcal{\phi})}
\newcommand{\NP}{\ensuremath{\mathrm{NP}}}
\newcommand{\PPAD}{\ensuremath{\mathrm{PPAD}}}
\newcommand{\LinearFIXP}{\ensuremath{\mathrm{Linear\text{-}FIXP}}}
\newcommand{\FIXP}{\ensuremath{\mathrm{FIXP}}}
\let\oldtabular\tabular 
\renewcommand{\tabular}{\normalsize\oldtabular}
\theoremstyle{plain}
\newtheorem{theorem}{Theorem}[section]
\newtheorem{lemma}[theorem]{Lemma}
\newtheorem{proposition}[theorem]{Proposition}
\newtheorem{observation}[theorem]{Observation}
\theoremstyle{definition}
\newtheorem{definition}[theorem]{Definition}
\theoremstyle{remark}
\title{On the complexity of Pareto-optimal and envy-free lotteries\thanks{IC and KAH were partially supported by the Independent Research Fund Denmark under grants 2032-00185B and 9040-00433B, respectively.}}
\author{Ioannis Caragiannis \and Kristoffer Arnsfelt Hansen \and Nidhi Rathi}
\date{Department of Computer Science, Aarhus University\\
{\AA}bogade 34, 8200 Aarhus N, Denmark\\
Email: \texttt{\{iannis,arnsfelt,nidhi\}@cs.au.dk}}
\begin{document}

\maketitle

\begin{abstract}
We study the classic problem of dividing a collection of indivisible resources in a \emph{fair} and \emph{efficient} manner among a set of agents having varied preferences. \emph{Pareto optimality} is a standard notion of economic efficiency, which states that it should be impossible to find an allocation that improves some agent’s utility without reducing any other’s. On the other hand, a fundamental notion of fairness in resource allocation settings is that of \emph{envy-freeness}, which renders an allocation to be fair if every agent (weakly) prefers her own bundle over that of any other agent's bundle. Unfortunately, an envy-free allocation may not exist if we wish to divide a collection of indivisible items. Introducing randomness is a typical way of circumventing the non-existence of solutions, and therefore, {\em allocation lotteries}, i.e., distributions over allocations have been explored while relaxing the notion of fairness to \emph{ex-ante} envy freeness.

We consider a general fair division setting with $n$ agents and a family of admissible $n$-partitions of an underlying set of items. Every agent is endowed with \emph{partition-based utilities}, which specify her cardinal utility for each bundle of items in every admissible partition. In such fair division instances, Cole and Tao~(2021) have proved that an ex-ante envy-free and Pareto-optimal allocation lottery is always guaranteed to exist. We strengthen their result while examining the computational complexity of the above total problem and establish its membership in the complexity class \PPAD. Furthermore, for instances with a constant number of agents, we develop a polynomial-time algorithm to find an ex-ante envy-free and Pareto-optimal allocation lottery. On the negative side, we prove that maximizing social welfare over ex-ante envy-free and Pareto-optimal allocation lotteries is \NP-hard.
\end{abstract}
\section{Introduction}
Fairly dividing a collection of resources among individuals (often dubbed as agents) with varied preferences forms a key concern in the design of many social institutions. Such problems arise naturally in many real-world scenarios such as assigning computational resources in a cloud com\-pu\-ting environment,  air traffic management, dividing business assets, allocation of radio and television spectrum, course assignments, and so on \cite{AdjustedWinner,etkin2007spectrum,moulin2004fair,vossen2002fair}. The fundamental problem of fair division lies at the interface of economics, social science, mathematics, and computer science, and its formal study dates back about seven decades~\cite{dubins1961cut,steihaus1948problem}. In the last few decades, the area of fair division has witnessed a flourishing flow of research; see \cite{survey2022,brams1996fair,brandt2016handbook} for excellent expositions. 

Traditionally, in early literature, fair division has been studied for a single \emph{divisible} resource, classically known as \emph{fair cake cutting}. Here, each agent specifies her valuations over a unit interval cake via a probability distribution over $[0,1]$ and the problem is to divide the cake among agents in a fair manner. The quintessential notion of fairness in this line of work is that of \emph{envy-freeness}, introduced by Foley \cite{foley1966resource} and Varian \cite{Varian1974Equity}. A cake division is said to be \emph{envy-free} if every agent prefers her own share of the cake over any other agent's share. Stromquist \cite{stromquist1980cut} famously proved that an envy-free cake division (where every agent receives a connected interval of the cake) is always guaranteed to exist, under mild conditions. Later, Su \cite{su1999rental}  developed another existential proof using Sperner's Lemma and established a connection between the notion of envy-freeness and topology. Such strong existential results have arguably placed the notion of envy-freeness as the flagship bearer of fairness in resource allocation settings.

On the other hand, \emph{Pareto optimality} is a standard notion of economic efficiency, which states that it should be impossible to find an allocation that improves some agent’s utility without reducing any other's. Another important notion of (collective) efficiency measure of an allocation is that of \emph{social welfare} \cite{moulin2004fair}  which is the sum of all the utilities derived by agents from their assigned bundle.

The goal of being fair towards the participating agents and achieving collective (economic) efficiency form the two important paradigms of resource allocation problems.   Unfortunately, for an indivisible set of items, an envy-free allocation may not exist. For example, an instance with two agents having positive value for a single item admits no envy-free allocation.

A fair division instance consists of a set $[n] = \{1,2 \dots, n\}$ of $n$ agents and a set $M$ of items. In the most basic setting, every agent $i$ has an additive utility function $u_i \colon 2^M \to \mathbb{R}$ that specifies her cardinal preferences for the items of a given bundle; in particular, $u_i(j) \coloneqq u_i(\{j\})$ denotes agent $i$'s utility for item $j \in M$. We say an \emph{allocation} is a partition of items into $n$ bundles, where every agent is assigned one bundle. The goal of simultaneously achieving fairness and efficiency is challenging for the problem of allocating indivisible items.
Besides the mentioned fact that an envy-free allocation is not guaranteed to exists, in cases where envy-free allocations do exist, envy-freeness may not be compatible with Pareto optimality~\cite{BL16}.

The above discussion suggests that one should consider distributions over allocations (to be referred as \emph{allocation lotteries}) in order to simultaneously achieve fairness and efficiency guarantees. In the random assignment literature in economics, the idea of constructing a fractional allocation and implementing it as a lottery over deterministic allocations was introduced by Hylland and Zeckhauser \cite{HyllandZ1979-HZ}. Introducing randomness is a typical way of circumventing the non-existence of various solution concepts, especially in computational social choice theory \cite{abdulkadirouglu1998random,aziz2019probabilistic,caragiannis2019stable,dougan2016efficiency}. In the process of exploring allocation lotteries, we appropriately relax the notion of fairness to \emph{ex-ante envy-freeness}, which values the random bundles allocated agents in terms of expected utility. Recent works of \cite{aziz2020simultaneously,budish2013designing,caragiannis2021interim,freeman2020best} deals with various computational aspects of allocation lotteries that are fair and efficient for fair division instances with additive utilities.
Observe that, allocation lotteries that are just ex-ante envy-free or just ex-ante Pareto-optimal can be trivially computed in polynomial time. For the former, one can solve a linear program, while for the latter, one can assign each bundle to the agent that has the highest utility for it. That is, these notions of fairness and efficiency are tractable if dealt with individually. Therefore, the important question is to understand the computational complexity of computing allocation lotteries that are simultaneously ex-ante envy-free and Pareto-optimal. In this work, we consider this question for the most general setting of fair division, as detailed in the following section.


%




\subsection{Context and overview of our results}
In this work, we consider a very general fair division setting with $n$ agents and a family of admissible $n$-partitions of an underlying set of items. Every agent is endowed with \emph{partition-based utilities} that specify her cardinal utility for different bundles in every partition. For such a broad class of fair division instances with partition-based utilities, including negative-valued utilities, the recent work of Cole and Tao \cite{cole2021existence} proves that an ex-ante envy-free and Pareto-optimal allocation lottery is always guaranteed to exist. 

Note that, partition-based utilities provide a much broader way of expressing agents' utilities. In particular, it is possible that an agent may value the exact same bundle of items in two distinct partitions at two different values. Or, there may be a certain partition of items that is not favourable or suitable (depending on the context of application), and this generalization allows us to remove unsuitable partitions from the family of admissible partitions, and still, the existence of ex-ante envy-free and Pareto optimal allocation lotteries is guaranteed. 

In this work, we examine the computational complexity of the above total search problem and strengthen the work of Cole and Tao~\cite{cole2021existence}. In particular, we establish that the problem of finding an ex-ante envy-free and Pareto optimal allocation lottery for fair division instances with partition-based utilities belongs to the complexity class \PPAD. This containment result is even interesting for the special case of a single admissible partition. Namely, our \PPAD\ membership result is for the \emph{exact} search problem, of computing a rational valued lottery. This can be contrasted with the lottery provided by the Hylland-Zeckhauser (HZ) pseudo-market. Vazirani and Yannakakis \cite{VaziraniY2021-HZ} gave a simple example with four agents and four goods where the \emph{unique} HZ equilibrium gives an irrational-valued lottery. This fact means that any algorithm for computing a HZ equilibrium exactly must overcome numerical challenges.
Our result on the other hand gives hope for the possibility of developing a \emph{practical} algorithm for computing an exact ex-ante envy-free and Pareto optimal allocation lottery, for instance by an adaptation of Lemke's algorithm~\cite{Lemke1964-Bimatrix}.

For instances with a constant number of agents, we develop a polynomial-time algorithm to compute an exact ex-ante envy-free and Pareto-optimal lottery. On the negative side, we prove that maximizing social welfare over ex-ante envy-free and Pareto optimal allocation lotteries is \NP-hard.

\subsection{Further related work}
Fairness in resource-allocation settings is extensively studied in the economics, mathematics, and computer science literature (see \cite{brams1996fair,brandt2016handbook,moulin2004fair}). As mentioned above, envy-free allocations may not exist for the case of indivisible items. Since envy-freeness is arguably a fundamental notion of fairness, as evident from its importance in fair cake cutting, there has been a significant body of research aimed towards finding ex-ante envy-free allocation lotteries in the indivisible setting. 
The work of Freeman~et~al.~\cite{freeman2020best} addresses the key question of whether ex-ante envy-freeness
can be achieved in combination with \emph{ex-post envy-freeness up to one item}. They settle it positively by designing an efficient algorithm that achieves both properties simultaneously. 
Caragiannis~et~al.~\cite{caragiannis2021interim} explore the \emph{interim allocation lotteries (iEF)} which provide fairness guarantees that lie between ex-post and ex-ante envy-freeness. They develop
polynomial-time algorithms for computing iEF lotteries that maximize various efficiency notions.
 
Budish~et~al.~\cite{budish2013designing} employ a general class of random allocation mechanisms to achieve ex-ante fairness and efficiency in the presence of real-world constraints. Several other works explore fairness and efficiency guarantees of allocation lotteries as well, but for ordinal utilities~\cite{abdulkadirouglu1998random,bogomolnaia2001new,chen2002improving}.

Another line of research has explored various relaxations of envy-freeness. The notion of \emph{envy-freeness up to one item (EF1)} was introduced by Budish~et~al.~\cite{budish2011combinatorial} as one of the first `good' relaxations of envy-freeness in the indivisible setting. We say an
allocation is EF1 when every agent (weakly) prefers her own bundle over any other agent~$j$'s bundle after removing some item from~$j$’s bundle. EF1 allocations are guaranteed to always exist for general monotone valuations and can be computed efficiently \cite{lipton2004approximately}. Moreover, this fairness notion is compatible with the economic efficiency objective of Pareto-optimality~\cite{CKMPSW19}. Later, \emph{envy-freeness up to any item (EFX)} was introduced by Caragiannis~et~al.~\cite{CKMPSW19} as a refinement of EF1 and is now considered as the most compelling fairness criterion while dividing indivisible items. We say an allocation is EFX when every agent
(weakly) prefers her own bundle than any other agent~$j$’s bundle after removing her least positively-valued item from~$j$’s bundle. Recent works \cite{amanatidis2021maximum,amanatidis2020multiple,chaudhury2020efx,chaudhury2021little} have shown existential guarantees for EFX in various special cases.



\section{The model}

Consider a set $[n]=\{1,2, \dots, n\}$ of $n$ agents and a collection $\mathcal{P}=\{P^1,P^2,\dots, P^m\}$ of admissible partitions of a set $M$ of items. Every partition $P^k$ for $k \in [m]$ consists of $n$ bundles, i.e.,  $\abs{P^k}=n$  and the union of those bundles is $\bigcup_{A \in P^k}A \subseteq M$. Agents are endowed with utility functions $u_i$'s that specify their cardinal preferences for all bundles in every different partition. In particular, the function $u^k_{ij}$ specifies \emph{partition-based cardinal utilities} of agent $i \in [n]$ for the $j$th  bundle (for $j \in [n]$) in partition $P^k \in \mathcal{P}$. It is important to note that an agent with partition-based utilities can have different utilities for the exact same bundle, occurring in two distinct partitions.
We will denote a fair division instance by the tuple $\I=\langle[n], \mathcal{P}, \{u^k_{ij}\}_{i,j \in [n],k \in [m] }\rangle$.

For a given fair division instance, we define an \emph{allocation} to be an assignment of the $n$ bundles of a partition in $\mathcal{P}$ to the agents, such that every agent receives exactly one bundle. We assume that for any partition, the set of (admissible) allocations is specified by the $n!$ permutations of $n$ bundles among $n$ agents, and that the utility of an agent depends only on the partition and the bundle received and not to whom the remaining bundles are given. We refer to this property of a  fair division instance as the \emph{anonymity property}.
Therefore, we have a total of  $m \cdot n!$ many distinct admissible \emph{allocations} in a given fair division instance. Furthermore, in a given fair division instance, we define a \emph{lottery} to be a probability distribution over these allocations.

The overarching goal is to find a \emph{fair} and \emph{efficient} lottery among agents from the given set of admissible partitions. As mentioned, Cole and Tao \cite{cole2021existence} established the existence of fair and efficient lotteries for fair division instances with the anonymity property using Kakutani's fixed-point theorem~\cite{Kakutani1941}.
Since there are a total of $m \cdot n!$ many allocations, one can specify probabilities with which every allocation occurs in a lottery. This leads to a very convenient but also very inefficient way of representing a lottery via an exponential-dimensional  vector $(p_1, p_2, \dots, p_{m \cdot n!})$, where $p_i$ represents the probability with which the $i$-th allocation is chosen. This representation was used by Cole and Tao~\cite{cole2021existence} for their proof of existence, but it is clearly not suitable for studying the computational aspects of finding lotteries.
Instead, we will represent a lottery in the following  manner: Let $\p=\{p_k\}_{k \in [m]}$, where $p_k \in [0,1]$ denotes the probability with which  partition $P^k \in \mathcal{P}$ is selected in a lottery. The vector $\q=\{q^k_{ij}\}_{i,j \in [n], k \in [m]}$ of length $m \cdot n^2$ then specifies the full lottery, where $q^k_{ij}$ is the probability with which the lottery $\q$ assigns the $j$th bundle in partition $P^k$ to agent $i \in [n]$. The vectors $\p$ and $\q$ characterized by the following constraints.
\begin{align*}
    \sum_{i=1}^n q^k_{ij} &=p_k \ \text{for all} \ j \in [n], k \in [m] \tag{Every bundle is assigned to one agent}\\
     \sum_{j=1}^n q^k_{ij} &=p_k \ \text{for all} \ i \in [n], k \in [m] \tag{Every agent receives one bundle}\\
     \text{and,} \ \sum_{k=1}^m p_k &=1 
\end{align*}
We can now express the expected utility, $\e [u_i(\q)]$, for agent $i \in [n]$ in a lottery $\q$ as 
\[ \e [u_i(\q)] \coloneqq \sum_{k=1}^m \sum_{j=1}^n u^k_{ij} \cdot q^k_{ij} \enspace .\]
More generally, let \[u_i(\q;i') \coloneqq \sum_{k=1}^m \sum_{j=1}^n u^k_{ij} \cdot q^k_{i' j} \] denote the expected utility of agent $i$ for the bundle of agent $i'$  in the lottery $\q$. Observe that, we have $u_i(\q;i)=\e [u_i(\q)]$ for any agent $i \in [n]$.

Let us now define the standard notions of fairness and optimality in resource allocation settings. A lottery $\q$ is said to be \emph{ex-ante envy-free} if $u_i(\q;i) \geq u_i(\q;i')$ holds for all $i,i' \in [n]$. Furthermore, we say that $\q$ is  \emph{ex-ante Pareto-optimal} if there does not exist any other lottery $\widetilde{\q}$ such that $u_i(\widetilde{\q};i) \geq u_i(\q;i)$ holds for all $i \in [n]$, with a strict inequality for at least one agent $i \in [n]$. \emph{Social welfare} is a standard notion of measuring the collective welfare of an allocation. We define \emph{social welfare} of a lottery $\q$ as the sum of the expected utilities of all agents, i.e., $\mathrm{SW}(\q) = \sum_{i \in [n]} u_i(\q;i)$.

 \section{\PPAD-membership}

In this section, we show that the problem of finding an exact ex-ante envy-free and Pareto-optimal lottery in a given fair division instance belongs to the class \PPAD.
Our proof is based on (i) a significant simplification of the existence proof of Cole and Tao \cite{cole2021existence}, (ii) a characterization of \PPAD\ in terms of computing fixed points of piecewise linear arithmetic circuits due to Etessami and Yannakakis~\cite{EtessamiY2010-FIXP} (i.e.\ $\PPAD=\LinearFIXP$), and (iii) a framework for proving \FIXP\ and \PPAD-membership via convex optimization recently developed by Filos-Ratsikas~et~al.~\cite{Filos-RatsikasH2021-FIXP,Filos-RatsikasH2023-PPAD}. Formally, we obtain the following theorem.

\begin{theorem}
  The problem of finding an ex-ante envy-free and Pareto-optimal
  lottery in a fair division instance belongs to \PPAD.
  \label{THM:PPAD}
\end{theorem}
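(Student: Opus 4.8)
The plan is to recast the search for an ex-ante envy-free and Pareto-optimal lottery as the computation of a Brouwer fixed point of a continuous self-map of a low-dimensional simplex, where the map is assembled entirely from piecewise-linear arithmetic operations together with ``OPT gates'' that solve parametrized linear programs; membership in \PPAD\ then follows by invoking the convex-optimization framework of Filos-Ratsikas et al.~\cite{Filos-RatsikasH2021-FIXP,Filos-RatsikasH2023-PPAD} together with the identity $\PPAD=\LinearFIXP$ of Etessami and Yannakakis~\cite{EtessamiY2010-FIXP}.

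The starting point is the standard scalarization of Pareto-optimality: a feasible lottery $\q$ is ex-ante Pareto-optimal whenever it maximizes the weighted welfare $\sum_{i\in[n]}\lambda_i\,u_i(\q;i)$ over the (bounded) polytope of feasible lotteries for some weight vector $\lambda$ in the simplex $\Delta=\{\lambda : \lambda_i\ge0,\ \sum_i\lambda_i=1\}$; full Pareto-optimality holds when $\lambda$ is strictly positive, and in general can be forced by replacing the single linear program with a lexicographic refinement that, after maximizing the $\lambda$-weighted welfare, successively maximizes the welfare of the remaining agents on the optimal face. This is precisely where I would streamline the argument of Cole and Tao~\cite{cole2021existence}: rather than applying Kakutani's theorem on the exponentially large space of lotteries, the relevant fixed-point domain becomes the $(n-1)$-dimensional simplex $\Delta$ of welfare weights, and the lottery is recovered as a by-product of the OPT gate --- which is what eventually yields an \emph{exact}, rational solution.

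Concretely, I would define the map $F\colon\Delta\to\Delta$ as follows. Given weights $\lambda$, let $\q(\lambda)$ be the canonically selected (hence rational) optimizer of the above linear, or lexicographic, program. For each agent $i$ compute her envy $e_i(\lambda)=\max\!\bigl(0,\ \max_{i'\in[n]} u_i(\q(\lambda);i')-u_i(\q(\lambda);i)\bigr)$, a piecewise-linear function of $\q(\lambda)$, and set $F_i(\lambda)=\bigl(\lambda_i+e_i(\lambda)\bigr)\big/\bigl(1+\sum_{i'}e_{i'}(\lambda)\bigr)$. At a fixed point $\lambda^\star$ one reads off $\lambda^\star_i\sum_{i'}e_{i'}(\lambda^\star)=e_i(\lambda^\star)$ for every $i$; the crux is then to show that $\sum_{i'}e_{i'}(\lambda^\star)=0$, i.e.\ that $\q(\lambda^\star)$ is ex-ante envy-free. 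For this I would use an exchange argument exploiting that, inside every admissible partition $P^k$, the assignment probabilities $(q^k_{ij}/p_k)_{i,j}$ form a doubly stochastic matrix, so that cyclically permuting the bundle-profiles of the agents along a cycle of the ex-ante envy graph (or, more generally, transferring a profile along a directed path in it) is a feasible perturbation of $\q$; optimality of $\q(\lambda^\star)$ for the weighted welfare then conflicts with any such improving move carrying positive weight, which, combined with the fixed-point identity, is meant to rule out envy altogether. Pareto-optimality of $\q(\lambda^\star)$ holds by construction of the OPT gate.

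Finally I would observe that $F$ is built from gates --- affine maps, $\max$, division by a positive affine form, and the OPT gate for a linear program over a bounded polytope --- all supported by the framework of Filos-Ratsikas et al.~and implementable by piecewise-linear arithmetic circuits; hence the induced fixed-point problem lies in \LinearFIXP, which equals \PPAD\ by~\cite{EtessamiY2010-FIXP}, and since \LinearFIXP\ solutions are rational of polynomial bit-length this yields membership for the exact search problem. I expect the main obstacle to be twofold. First, the OPT gate has to be turned into a genuinely single-valued continuous function rather than a set-valued correspondence, since the optimal face of the welfare-maximizing program is generically a polytope; this is exactly the point handled by the canonical-selection machinery of the framework, and by the lexicographic layer that simultaneously pins down full Pareto-optimality. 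Second, and more delicate, is verifying that the fixed point is truly envy-free: this is the technical heart of the streamlined Cole--Tao existence proof, and it is where the doubly stochastic / cyclic-exchange structure must be used carefully, rather than a naive Walras-law-type identity on the envy vector, which does \emph{not} hold in this setting.
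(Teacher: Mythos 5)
Your high-level strategy matches the paper's: reduce the search to a fixed-point problem solvable by a piecewise-linear arithmetic pseudo-circuit, invoke the linear-OPT-gate framework of Filos-Ratsikas et al., and conclude via $\PPAD=\LinearFIXP$. But the specific map you propose and, more seriously, your analysis of its fixed points diverge from the paper's and contain a real gap.

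The paper does not build a Brouwer map $\lambda\mapsto(\lambda+e(\lambda))/(1+\sum e_{i'})$ on the simplex. Instead it fixes a \emph{quantitative} threshold $\rho>0$ (Definition~\ref{DEF:rho}, derived from the utility ratios) and searches for a pair $(\q,\w)$ with $\w\in W_\eps$ such that (i)~$\q$ maximizes the $\w$-weighted welfare, and (ii)~$\w$ satisfies the conditional constraints ``$l$ envies $h$ in $\q$ $\Rightarrow$ $w_h\le\rho w_l$.'' The crucial technical ingredient is Lemma~\ref{LEM:WeightInequalityImpliesEnvyFree}: optimality of $\q$ together with $w_h\le\rho w_l$ already forces $u_l(\q;l)\ge u_l(\q;h)$, so envy cannot persist at a fixed point. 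This is a Birkhoff--von~Neumann swap argument between two agents $l,h$, quantitatively calibrated by $\rho$. Your proposal contains no analogue of this lemma, and this is precisely what makes the paper's fixed-point analysis go through.

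Your own argument for ``$E=0$ at the fixed point'' rests on a cyclic or path exchange in the envy graph. The cyclic part is merely Observation~\ref{OBS:ParetoImpliesAcyclic} (Pareto-optimality $\Rightarrow$ acyclic envy graph), which is necessary but far from sufficient: an acyclic envy graph can still have edges. The ``transferring a profile along a directed path'' move you invoke is \emph{not} feasible --- shifting bundle rows along a non-cyclic path breaks the doubly stochastic constraints; only cyclic shifts preserve them. So the exchange you describe cannot rule out the acyclic-but-envious case. A correct version of your plan would instead use the pairwise swap: if agent $l$ with $\lambda_l>0$ envies agent $h$ with $\lambda_h=0$, swapping the two rows of $\q$ strictly improves the weighted objective, contradicting optimality; combined with the fixed-point identity $\lambda_i\propto e_i$ (so $e_i=0\Leftrightarrow\lambda_i=0$) and acyclicity, one would conclude $E=0$. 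But this repaired argument crucially relies on $\lambda_h=0$, which is exactly where your scheme runs into trouble: with weights on the full simplex, the LP solution is not Pareto-optimal when some weights vanish, so Observation~\ref{OBS:ParetoImpliesAcyclic} no longer applies. You propose a lexicographic refinement to patch this, but that is not a single LP and is not directly supported by the linear-OPT-gate, which handles one convex program (or one feasibility program with conditional constraints) per gate. The paper sidesteps the whole tension by working in $W_\eps$ (strictly positive weights, so Pareto-optimality is automatic) and using the $\rho$-calibrated conditional constraints plus Lemma~\ref{LEM:WeightInequalityImpliesEnvyFree} in place of the zero-weight swap. As it stands your proposal is missing the quantitative weight--envy link that drives the contradiction, and the exchange argument meant to supply it does not work.
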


It is possible to adapt the existence proof of Cole and Tao (by changing to our succinct representation of lotteries) to obtain a proof of $\FIXP$ membership using the framework of Filos-Ratsikas~et~al~\cite{Filos-RatsikasH2021-FIXP}. The proof of Cole and Tao employs Kakutani's fixed point theorem to a correspondence defined on pairs consisting a lottery $\q$ and a vector of positive weights $\w \in W_\eps$ for the agents, from a closed set $W_\eps$. This correspondence maps $(\q,\w)$ to pairs $(\q',\w')$ such that $\q'$ is a lottery maximizing the weighted sum of utilities of the agents and where $\w'$ is obtained from $\w$ by translating each coordinate by a nonlinear function of the lottery $\q$ followed by a projection to the set $W_\eps$.

The maximization of the weighted sum of utilities may be phrased as a linear program and the projection may be phrased as a convex quadratic program. While both of these fall in the scope of the framework of Filos-Ratsikas~et~al.~\cite{Filos-RatsikasH2023-PPAD} for proving \PPAD-membership, the nonlinear transformation involved cannot be computed be a piecewise linear artihmetic circuit.

Our simplified proof involves only optimization of a linear program and the solution of a feasibility program with conditional linear constraints, together with operations computable by linear arithmetic circuits. In this case the framework Filos-Ratsikas~et~al.\ applies to give \PPAD-membership~\cite{Filos-RatsikasH2023-PPAD}.

Another framework for proving \PPAD-membership was also recently introduced by Papadimitriou, Vlatakis-Gkaragkounis and Zampetakis~\cite{PapadimitriouVZ2023-Kakutani}. With this framework, however, it would only be possible to directly prove \PPAD-membership for an \emph{approximate} version of the problem, rather than the exact problem.



In the remainder of this section, we let $\I=\langle[n], \mathcal{P}, \{u^k_{ij}\}_{i,j,k}\rangle$ denote a fair division instance with $n$~agents and $m=\abs{\mathcal{P}}$ partitions, where the utilities $u_{ij}^k$ are given as rational numbers.

\subsection{Fixed point formulation}
We first present our fixed point formulation for ex-ante envy-free and Pareto-optimal lotteries; afterwards we consider the implications for the computational complexity of the problem.

A standard technique for expressing the Pareto frontier of an
optimization problem, also employed by Cole and Tao, is the \emph{weighted sum
method}~\cite{Zadeh1963-Pareto}. Let $w_1,\dots,w_n>0$ be strictly positive weights. Then, any lottery $\q$ maximizing the weighted sum of utilities $\sum_{i=1}^n w_i u_i(\q;i)$ must be Pareto-optimal. Conversely, if $\q$ is a Pareto-optimal lottery, there are strictly positive weights such that $\q$ maximizes the weighted sum of utilities.

The task of maximizing the weighted sum of utilities can be expressed by the following linear program with decision variables $q_{ij}^k$ and $p_k$, and parameterized by the variables $w_i$.

\begin{equation}
  \begin{array}{ll@{}ll}
    \text{maximize}   & \sum\limits_{i=1}^n w_i \sum\limits_{k=1}^m \sum\limits_{j=1}^n u_{ij}^k q_{ij}^k&\\
    \text{subject to} & \sum\limits_{i=1}^n q^k_{ij} =p_k & \text{for all} \ j \in [n], k \in [m]  \\
    \displaystyle     & \sum\limits_{j=1}^n q^k_{ij} =p_k & \text{for all} \ i \in [n], k \in [m] \\
    \displaystyle     & \sum\limits_{k=1}^m p_k =1 &  \\   
    \displaystyle     & p_k\geq 0 & \text{for all}\ k \in [m] & \\
    \displaystyle     & q_{ij}^k\geq 0 & \text{for all} \ i,j \in [n], k \in [m] \\
  \end{array}
  \label{EQ:MaxWeightedSumLP}
\end{equation}

The proof of Cole and Tao~\cite{cole2021existence} shows the existence of positive weights such that any lottery $\q$ maximizing the corresponding weighted sum of utilities is also ex-ante envy-free.

We next define the following key quantity $0<\rho\leq \frac{1}{2}$ and state Lemma~\ref{lem:cole-tao} (proved in \cite{cole2021existence}) that will be used to place
restrictions of weights. For completeness we give the proof of the lemma in Appendix~\ref{app:413}, adapting the 
proof of \cite[Claim~4.13]{cole2021existence} to suit our representation of
lotteries.

\begin{definition}
  Let
  $J=\{(k,l,h,a,b) \in [m] \times [n]^4 \mid u_{la}^k < u_{lb}^k \text{ and } u_{ha}^k < u_{hb}^k\}$. We define $\rho$ as follows,
 \[   
\rho = 
     \begin{cases}
       \frac{1}{2} \min_{(k,l,h,a,b)\in J}  (u_{lb}^k-u_{la}^k)/(u_{hb}^k-u_{ha}^k) & \text{ if } J \neq \emptyset\\
         \frac{1}{2} & \text{ otherwise }
     \end{cases}
\]
\label{DEF:rho}
\end{definition}
\begin{lemma}[cf.\ {\cite[Claim~4.13]{cole2021existence}}] \label{lem:cole-tao}
  Suppose that $(\q,\p)$ is an optimal solution of
  LP~(\ref{EQ:MaxWeightedSumLP}). If $0<w_h \leq \rho w_l$ it follows
  that $u_l(\q;l) \geq u_l(\q;h)$ (i.e.\ that agent~$l$ does not envy
  agent~$h$).
\label{LEM:WeightInequalityImpliesEnvyFree}
\end{lemma}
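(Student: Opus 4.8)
The plan is to argue by contradiction: assume $(\q,\p)$ is optimal for LP~(\ref{EQ:MaxWeightedSumLP}), that $0 < w_h \leq \rho w_l$, but that agent~$l$ envies agent~$h$, i.e.\ $u_l(\q;h) > u_l(\q;l)$. From this I would construct a feasible modification of the lottery that strictly increases the weighted objective $\sum_i w_i u_i(\q;i)$, contradicting optimality. The natural modification is a \emph{local swap} between agents~$l$ and~$h$: in each partition $P^k$, shift some probability mass so that agent~$l$ receives (a little more of) the bundles that agent~$h$ currently holds and agent~$h$ receives (a little more of) the bundles agent~$l$ currently holds. Concretely, for a small $\delta>0$ and each $k,j$, set $\tilde q^k_{lj} = q^k_{lj} + \delta(q^k_{hj}-q^k_{lj})$ and $\tilde q^k_{hj} = q^k_{hj} + \delta(q^k_{lj}-q^k_{hj})$, leaving all other rows unchanged. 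One checks immediately that this preserves the three equality constraints (the row sums for $l$ and $h$ are each still $p_k$, and every column sum is unchanged since the two perturbations cancel), and that for $\delta$ small enough all entries stay nonnegative; so $(\tilde\q,\p)$ is feasible.

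The effect on the objective is $\delta\bigl(w_l\,(u_l(\q;h)-u_l(\q;l)) + w_h\,(u_h(\q;l)-u_h(\q;h))\bigr)$. The first term is positive by the envy assumption. The danger is that the second term — agent~$h$'s loss from giving up its own bundle — is negative and large enough to swamp the gain. This is exactly where $\rho$ and the set $J$ enter, and I expect this to be the crux of the argument. The key observation is that the swap can be done partition-by-partition and bundle-by-bundle, and we only need to move mass on pairs of bundles $(a,b)$ in partition~$k$ where agent~$l$ strictly prefers $b$ to $a$ (so that the transfer genuinely helps $l$); on such a pair either $u^k_{ha} \geq u^k_{hb}$, in which case the move also (weakly) helps $h$ and costs nothing, or $(k,l,h,a,b)\in J$, in which case the definition of $\rho$ gives $u^k_{lb}-u^k_{la} \geq 2\rho\,(u^k_{hb}-u^k_{ha}) > \rho\,(u^k_{hb}-u^k_{ha})$. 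Combining this with $w_h \leq \rho w_l$ yields, for the contribution of that bundle-pair, $w_l(u^k_{lb}-u^k_{la}) > w_l \rho (u^k_{hb}-u^k_{ha}) \geq w_h(u^k_{hb}-u^k_{ha})$, i.e.\ the weighted gain of $l$ strictly dominates the weighted loss of $h$ on that pair. Summing over all the bundle-pairs where mass is actually transferred, the total change in the weighted objective is strictly positive, contradicting optimality of $(\q,\p)$.

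To make the bundle-pair decomposition rigorous I would first reduce to the case where the transfer happens on a single partition $P^k$ and a single ordered pair of bundles $(a,b)$ with $q^k_{la}>0$, $q^k_{hb}>0$, $u^k_{la} < u^k_{lb}$: if agent~$l$ envies agent~$h$, there must exist \emph{some} $k$ with $\sum_j u^k_{lj} q^k_{hj} > \sum_j u^k_{lj} q^k_{lj}$, and within that partition an exchange argument (sorting bundles by $u^k_{l\cdot}$) produces such a pair where swapping a tiny amount $\delta$ of mass from $q^k_{la}$ to $q^k_{lb}$ and compensating in agent~$h$'s row strictly increases $u_l(\q;l)$. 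The remaining worry is the sign of agent~$h$'s change on that same pair; the case analysis above (either $u^k_{ha}\geq u^k_{hb}$, or the pair lies in $J$) handles it, using $\rho \leq \tfrac12$ only to guarantee $\rho$ is well-defined and positive. The main obstacle, then, is purely bookkeeping: setting up the perturbation so that feasibility is transparent and so that the "bad" bundle-pairs are exactly those captured by $J$, and that is precisely the adaptation of \cite[Claim~4.13]{cole2021existence} to the succinct $(\p,\q)$ representation that is deferred to Appendix~\ref{app:413}.
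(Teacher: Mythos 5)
Your proposal is correct and reaches the same conclusion, but by a genuinely different route. The paper's proof first applies the Birkhoff--von~Neumann theorem to decompose the doubly-stochastic-like matrices $\{q_{ij}^k\}$ into convex combinations of permutations, then compares $\q$ against the full ``swap $l$ and $h$'' lottery $\hat\q$; LP-optimality of $\q$ then yields, for every permutation $\pi$ in the support, the per-permutation inequality $w_h(u_{h\pi(h)}^k - u_{h\pi(l)}^k) \geq w_l(u_{l\pi(h)}^k - u_{l\pi(l)}^k)$, and the definition of $\rho$ together with $w_h \leq \rho w_l$ forces $u_{l\pi(l)}^k \geq u_{l\pi(h)}^k$ termwise; summing over the decomposition gives $u_l(\q;l)\geq u_l(\q;h)$. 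You instead argue contrapositively with a \emph{single} elementary perturbation: envy pins down a partition $k$ with $\sum_j u_{lj}^k q_{hj}^k > \sum_j u_{lj}^k q_{lj}^k$, a short averaging argument (the one you sketch by sorting on $u_{l\cdot}^k$, which does go through: taking $c=\min\{u_{la}^k: q_{la}^k>0\}$ one shows the absence of such a pair would force $\sum_j u_{lj}^k q_{hj}^k\leq cp_k\leq\sum_j u_{lj}^k q_{lj}^k$) extracts a pair $(a,b)$ with $q_{la}^k>0$, $q_{hb}^k>0$, $u_{la}^k<u_{lb}^k$, and the local $\delta$-swap on rows $l,h$ and columns $a,b$ preserves all row/column sums and feasibility; the two-case analysis using $J$ and the factor $\tfrac12$ in Definition~\ref{DEF:rho} shows the weighted objective strictly increases, contradicting optimality. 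Your route avoids Birkhoff--von~Neumann entirely and is arguably more elementary; the paper's route has the advantage that it stays closer to the structure of Cole and Tao's original Claim~4.13 and produces the conclusion directly as a summed inequality rather than by contradiction. Both establish exactly the stated implication, and your case analysis uses $\rho\leq\tfrac12$ correctly (the factor of two is what turns $w_h\leq\rho w_l$ into a \emph{strict} dominance of $l$'s gain over $h$'s loss).
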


Define $\eps=\rho^n/n$ and let
$W_\eps = \{w \in \RR^n \colon \sum_{i=1}^n w_i=1 \text{ and } w_i
\geq \eps\ \forall \ i \in [n]\}$.
We shall restrict the weights to belong to $W_\eps$, which in
particular, ensure that they are strictly positive. We consider the following feasibility problem with conditional linear constraints having decision variables
$w_i$, and parameterized by the variables $q_{ij}^k$. 
\begin{equation}
  \begin{array}{ll}
    \left[u_l(\q;h) - u_l(\q;l)>0\right]  \Rightarrow \left[w_h - \rho w_l \leq 0\right]  & \text{ for all } l,h \in [n]\\
    \sum\limits_{i=1}^n w_i = 1\\
    w_i \geq \eps & \text{ for all } i \in [n]
  \end{array}
  \label{EQ:WeightConditionalLP}
\end{equation}
Here, the conditional constraint $\left[u_l(\q;h) - u_l(\q;l)>0\right]  \Rightarrow \left[w_h - \rho w_l \leq 0\right]$ is satisfied if either $u_l(\q;h) - u_l(\q;l) \leq 0$ or $w_h - \rho w_l \leq 0$. In words, whenever agent~$l$ envies agent $h$ in the lottery $q$, a $\w$ solution of the system must satisfy $0< w_h \leq \rho w_l$, which is precisely the antecedent stated in Lemma~\ref{LEM:WeightInequalityImpliesEnvyFree}. We can think of the feasibility problem as a system of inequalities in variables $\w$, some of which may be ``disabled'' by inequalities expressed in the variables  $\q$.

In order to characterize the solvability of this feasibility program, it is convenient
to introduce the \emph{envy graph} of the lottery $\q$. Filos-Ratsikas~et~al.~\cite{Filos-RatsikasH2023-PPAD} consider feasibility programs as above in a general form and characterizes their solvability in terms of a \emph{feasibility graph}. In our case, this feasibility graph is exactly the same as the \emph{envy graph} defined next.
\begin{definition}[Envy graph]
  For a given lottery $\q$, denote by $\G(\q)$ the \emph{envy graph} with nodes
  $[n]$ and an arc $(l,h)$ whenever $u_l(\q;l) < u_l(\q;h)$, for all
  $l,h \in [n]$. We let $A(\G(\q))$ denote the set of arcs of $\G(\q)$.
\end{definition}
We can then precisely characterize the solvability of the feasibility problem~(\ref{EQ:WeightConditionalLP}) by the graph structure of~$\G(\q)$.
\begin{lemma}
  Suppose that $\q$ is a lottery such that $\G(\q)$ is acyclic. Then the feasibility program~(\ref{EQ:WeightConditionalLP}) is solvable.
\label{LEM:AcyclicImpliesSolvable}
\end{lemma}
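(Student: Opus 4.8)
The plan is to exploit the acyclicity of $\G(\q)$ to produce an explicit weight vector $\w \in W_\eps$ satisfying all the conditional constraints in~(\ref{EQ:WeightConditionalLP}). Since $\G(\q)$ is a directed acyclic graph on the node set $[n]$, it admits a topological order; that is, we can relabel the agents so that every arc $(l,h) \in A(\G(\q))$ satisfies $l < h$ in this order. I would then define weights that decrease geometrically along the topological order. Concretely, fix a representative like $\hat{w}_i = \rho^{\,\pi(i)}$, where $\pi$ is the position of agent $i$ in the topological order (say $\pi(i) \in \{0,1,\dots,n-1\}$), and then normalize by setting $w_i = \hat{w}_i / \sum_{i'=1}^n \hat{w}_{i'}$ so that $\sum_i w_i = 1$.

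The key step is to verify the three requirements. First, normalization gives $\sum_i w_i = 1$ by construction. Second, the lower bound $w_i \geq \eps$: since $0 < \rho \leq \frac12$, each unnormalized weight lies in $[\rho^{n-1}, 1]$, so the normalizing denominator is at most $n$, hence $w_i \geq \rho^{n-1}/n \geq \rho^n/n = \eps$ (using $\rho \leq 1$). Third, and most importantly, the conditional constraints: the antecedent $u_l(\q;h) - u_l(\q;l) > 0$ is, by definition of the envy graph, exactly the condition that $(l,h) \in A(\G(\q))$; in that case the topological order gives $\pi(h) \geq \pi(l)+1$, so $\hat{w}_h = \rho^{\pi(h)} \leq \rho^{\pi(l)+1} = \rho\,\hat{w}_l$, and after normalization (which scales both sides equally) we get $w_h \leq \rho w_l$, i.e.\ $w_h - \rho w_l \leq 0$, which is the consequent. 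When the antecedent fails, the implication holds vacuously. This settles feasibility.

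The only subtlety — the part worth stating carefully rather than the part that is hard — is that the constraint is a \emph{conditional} (disjunctive) one, so we are not solving an ordinary LP: we must check each pair $(l,h)$ separately and observe that the disjunction is satisfied either because the antecedent is false or because the geometric gap forced by the topological order makes the consequent true. Everything else is a routine computation with $\rho$. I expect no real obstacle; the proof is essentially the standard argument that a DAG's topological order can be converted into strictly separated "levels," combined with the bookkeeping needed to respect the normalization and the $\eps$-floor. (The converse direction — that solvability implies acyclicity — is presumably handled separately, either as part of a companion lemma or via the general feasibility-graph characterization of Filos-Ratsikas~et~al.~\cite{Filos-RatsikasH2023-PPAD}, and is not needed here.)
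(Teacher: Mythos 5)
Your proposal is correct, and it is essentially the paper's proof: both construct weights of the form $w_i = \rho^{\text{label}(i)} / \sum_{j} \rho^{\text{label}(j)}$ for some integer labeling of the nodes of the acyclic envy graph that increases by at least one along every arc, then check normalization, the $\eps$-floor ($w_i \geq \rho^{n-1}/n \geq \eps$), and the $\rho$-gap along arcs. The only difference is which labeling is used: you take a topological order $\pi$, while the paper uses the longest-path-to-a-sink length $d_i$; both are interchangeable, since all that matters is a labeling with a strict gap of at least $1$ along each arc.

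One thing worth flagging, since you already anticipated that the direction of the conditional constraint is the only subtle point: the paper, as written, defines $d_i$ as the length of a longest path \emph{from} $i$ to a sink, so an arc $(l,h)$ gives $d_l \geq d_h + 1$ and hence $w_l \leq \rho w_h$ --- which is the \emph{reverse} of the inequality $w_h \leq \rho w_l$ required by feasibility program~(\ref{EQ:WeightConditionalLP}). This is evidently a typographical slip in the paper (it would be corrected, for instance, by taking $d_i$ to be the longest path from a source \emph{to} $i$). Your topological-order labeling orients the gap correctly and lands on $w_h \leq \rho w_l$ as needed, so your writeup is the cleaner of the two here.
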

\begin{proof}
  First note that the condition $u_l(\q;h) - u_l(\q;l)>0$ is satisfied
  precisely when $(l,h) \in A(\G(\q))$. Thus we are to find weights
  $w_i$ such that $w_h \leq \rho w_l$, whenever $(l,h) \in A(\G(\q))$.

  For $i \in [n]$, let $d_i$ denote the length of a longest path in
  $\G(\q)$ from node $i$ to a sink node, and define
  \[
    w_i = \frac{\rho^{d_i}}{\sum_{j=1}^n \rho^{d_j}} \text{ for all } i \in [n] \enspace .
  \]
  Clearly $\sum_{i=1}^n w_i = 1$, and since $d_i \leq n$ and
  $\rho \leq 1$ we also have $w_i \geq \eps$.  Suppose now that
  $(l,h) \in A(\G(\q))$. This means that $d_l \geq d_h+1$ and thus
  also $w_l \leq \rho w_h$. In conclusion, we have that the weights
   $w_i$ are a solution to the feasibility program~(\ref{EQ:WeightConditionalLP}).
\end{proof}
We can note that acyclicity of $\G(\q)$ is also necessary for the solvability of the feasibility program~(\ref{EQ:WeightConditionalLP}), since the
inequalities $w_l \leq \rho w_h$ given by the arcs $(l,h)$ of a cycle
in $\G(\q)$ are contradictory. But note also that if $\G(\q)$ contains a cycle, all agents in the cycle
will increase their utility if the lottery is shifted along the
cycle. We thus have the following simple but crucial observation.
\begin{observation}[cf.\ {\cite[Claim~4.8]{cole2021existence}}]
  If $\q$ is Pareto-optimal, the envy graph $\G(\q)$ is acyclic.
  \label{OBS:ParetoImpliesAcyclic}
\end{observation}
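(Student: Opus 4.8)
The plan is to argue by contraposition: assuming that the envy graph $\G(\q)$ contains a directed cycle, I will construct a new lottery $\widetilde{\q}$ that Pareto-dominates $\q$, which shows $\q$ is not Pareto-optimal. The intuitive picture, already hinted at in the text, is that along a cycle $i_1 \to i_2 \to \dots \to i_r \to i_1$ of the envy graph, every agent $i_t$ strictly prefers the (expected) bundle currently held by the next agent $i_{t+1}$ over her own. So if we cyclically rotate the bundles along the cycle---give $i_1$ what $i_2$ had, give $i_2$ what $i_3$ had, and so on---every agent on the cycle is strictly better off while every agent off the cycle is unaffected.

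The main work is to realize this ``rotation'' within our succinct representation $\q = \{q_{ij}^k\}$, since a lottery is not literally a bundle assignment but a doubly-stochastic-type object scaled by the $p_k$'s. Concretely, for a small parameter $\delta>0$ I would define $\widetilde{\q}$ by, for each partition $k$ and bundle index $j$, moving $\delta$ units of probability mass from the arc $(i_{t+1},j)$ to the arc $(i_t,j)$ for every $t$ (indices mod $r$), i.e.\ $\widetilde{q}^{\,k}_{i_t j} = q^k_{i_t j} + \delta\,q^k_{i_{t+1} j} - \delta\, q^k_{i_t j}$ and $\widetilde{q}^{\,k}_{ij}=q^k_{ij}$ for $i$ not on the cycle. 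One checks this is a convex-combination-style perturbation that keeps all entries nonnegative for $\delta$ small, and preserves both families of equality constraints $\sum_i \widetilde{q}^{\,k}_{ij}=p_k$ and $\sum_j \widetilde{q}^{\,k}_{ij}=p_k$ (the column sums are untouched; the row sum for $i_t$ gains $\delta u$-mass worth of $p_k$ from $i_{t+1}$ and loses $\delta$ of its own, netting zero), so $\widetilde{\q}$ is a valid lottery with the same $\p$. The expected utility of agent $i_t$ then changes by $\delta\bigl(u_{i_t}(\q;i_{t+1}) - u_{i_t}(\q;i_t)\bigr)$, which is strictly positive precisely because $(i_t,i_{t+1})\in A(\G(\q))$; agents off the cycle see no change. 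Hence $\widetilde{\q}$ Pareto-dominates $\q$.

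The only real obstacle is the bookkeeping that the perturbation keeps $\widetilde{\q}$ inside the lottery polytope---nonnegativity is immediate for $\delta \le 1$ since each new entry is a convex combination of two old nonnegative entries, and the two families of equality constraints need the short verification sketched above. Everything else is a one-line computation of the change in expected utilities. I would present the argument in contrapositive form and keep it to a few lines, since it is essentially the ``cycle-elimination'' move familiar from the classical (deterministic) envy-graph procedure of Lipton~et~al.~\cite{lipton2004approximately}, transported to the fractional/lottery setting.
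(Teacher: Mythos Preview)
Your proposal is correct and follows exactly the approach the paper sketches in the sentence preceding the observation (``all agents in the cycle will increase their utility if the lottery is shifted along the cycle''); you have simply supplied the missing details of what ``shifting along the cycle'' means in the succinct representation, verifying that the perturbation $\widetilde{q}^{\,k}_{i_t j} = (1-\delta)\, q^k_{i_t j} + \delta\, q^k_{i_{t+1} j}$ preserves the lottery constraints and yields a strict improvement for each agent on the cycle.
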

We can now conclude with the following fixed-point formulation, showing that a pair $(\q,\w)$ that is  simultaneously solving the linear program~(\ref{EQ:MaxWeightedSumLP}) and the feasibility problem~(\ref{EQ:WeightConditionalLP}) give an ex-ante envy-free and Pareto-optimal lottery.
\begin{proposition}
  Suppose that $\q$ is a lottery and $\w \in W_\eps$ are weights such
  that $\q$ is an optimal solution of the linear program
  LP~(\ref{EQ:MaxWeightedSumLP}) with respect to the weights $\w$, and
  $\w$ is a solution of the feasibility program of conditional linear
  constraints~(\ref{EQ:WeightConditionalLP}) with conditions given by
  $\q$ (note that the system is in fact solvable by the optimality of $\q$, Observation~\ref{OBS:ParetoImpliesAcyclic} and
  Lemma~\ref{LEM:AcyclicImpliesSolvable}). Then $\q$ is an ex-ante
envy-free and Pareto-optimal lottery.
\label{PROP:FixedPointFormulation}
\end{proposition}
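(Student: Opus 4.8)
The plan is to verify the two defining properties of the claimed lottery separately, using the two programs as black boxes. For Pareto-optimality, I would invoke the weighted sum method (the forward direction of the characterization already recalled in the text): since $\w \in W_\eps$ has all coordinates at least $\eps > 0$, the weights are strictly positive, and therefore any lottery $\q$ maximizing $\sum_{i=1}^n w_i u_i(\q;i)$ subject to the lottery constraints must be Pareto-optimal. This is immediate: if some $\widetilde{\q}$ weakly dominated $\q$ with a strict improvement for one agent, then $\sum_i w_i u_i(\widetilde{\q};i) > \sum_i w_i u_i(\q;i)$ because all $w_i > 0$, contradicting optimality of $\q$ in LP~(\ref{EQ:MaxWeightedSumLP}).

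For ex-ante envy-freeness, I would argue by contradiction: suppose some agent $l$ envies some agent $h$, i.e.\ $u_l(\q;h) > u_l(\q;l)$. Then the antecedent of the conditional constraint in the feasibility program~(\ref{EQ:WeightConditionalLP}) for the pair $(l,h)$ is active, so since $\w$ solves that program we get $w_h - \rho w_l \leq 0$, i.e.\ $0 < w_h \leq \rho w_l$ (positivity from $\w \in W_\eps$). But this is exactly the hypothesis of Lemma~\ref{LEM:WeightInequalityImpliesEnvyFree}, applied to the optimal solution $(\q,\p)$ of LP~(\ref{EQ:MaxWeightedSumLP}), whose conclusion is $u_l(\q;l) \geq u_l(\q;h)$ --- contradicting the assumed envy. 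Hence no agent envies another, and $\q$ is ex-ante envy-free.

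Finally I would remark on consistency of the hypotheses: the proposition presupposes that both systems can be simultaneously satisfied, and this is guaranteed --- given a Pareto-optimal $\q$ (e.g.\ one maximizing the weighted sum for some positive weights, whose existence drives the whole fixed-point argument), Observation~\ref{OBS:ParetoImpliesAcyclic} shows $\G(\q)$ is acyclic, and Lemma~\ref{LEM:AcyclicImpliesSolvable} then shows feasibility program~(\ref{EQ:WeightConditionalLP}) is solvable. Since the actual existence of such a fixed point $(\q,\w)$ is deferred to the subsequent \PPAD-membership argument, the only content of this proposition is the two short deductions above, and there is no real obstacle: it is a clean bookkeeping step that glues Lemma~\ref{LEM:WeightInequalityImpliesEnvyFree}, the weighted sum method, and the definition of $W_\eps$ together. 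The one point requiring a modicum of care is ensuring the strict-versus-weak inequality directions line up ($w_h \leq \rho w_l$ with $\rho \leq 1/2$ forcing $w_h < w_l$, matching the form $0 < w_h \leq \rho w_l$ needed by Lemma~\ref{LEM:WeightInequalityImpliesEnvyFree}), which the membership $\w \in W_\eps$ handles by providing the strict positivity.
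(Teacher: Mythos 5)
Your proof matches the paper's argument exactly: Pareto-optimality follows from strict positivity of $\w$ and optimality of $\q$ in LP~(\ref{EQ:MaxWeightedSumLP}), and ex-ante envy-freeness follows by contradiction via the activated conditional constraint and Lemma~\ref{LEM:WeightInequalityImpliesEnvyFree}. The extra remarks on hypothesis consistency and inequality directions are harmless but not needed; the core deduction is correct.
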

\begin{proof}
  Since the weights $\w$ are strictly positive and $\q$ is an optimal
  solution of LP~(\ref{EQ:MaxWeightedSumLP}) it follows that $\q$ is
  Pareto-optimal. Suppose now for contradiction that there exists
  agents $l$ and $h$ such that agent $l$ envies agent $h$, that is,
  $u_l(\q;h) > u_l(\q;l)$. Since $\w$ is a solution to the
  system~(\ref{EQ:WeightConditionalLP}) with conditions given by $\q$ given it
  follows that $w_h \leq \rho w_l$. But then
  Lemma~\ref{LEM:WeightInequalityImpliesEnvyFree} gives
  $u_l(\q;h) \leq u_l(\q;l)$, contradicting the assumption. It thus follows that $\q$ must also be
  ex-ante envy-free.
\end{proof}

\subsection{\PPAD, \FIXP, and \LinearFIXP}
The complexity class \PPAD\ was originally defined in seminal work of Papadimitriou~\cite{Papadimitriou1994-TFNP} as the class of total \NP\ search problems reducible to a concrete problem called \textsc{End-Of-Line}. As mentioned above, to obtain result, we shall instead make use of a characterization of \PPAD\ in terms of computation of fixed points of functions computed by piecewise linear arithmetic circuits. Below we briefly introduce this characterization and refer to~\cite{EtessamiY2010-FIXP} for further details.

An arithmetic circuit is a circuit $C$ with gates computing binary operations belonging to the set $\{+,-,\ast,\div,\max,\min\}$ together with rational constants. The size of $C$ refers to the size of an encoding of $C$. A \emph{piecewise linear} arithmetic circuit $C$ restricts the allowable binary operations to the set $\{+,-,\max,\min\}$, but allows also for multiplication by rational constants.

The class \FIXP\ consists of (real-valued) search problems that reduce to finding a fixed point of a function $F \colon D \to D$, where $D$ is an explicitly given convex polytope and $F$ is a function computable by an algebraic circuit. By Brouwer's fixed point theorem such a fixed point is guaranteed to exist, thus making the search problem a total search problem. \LinearFIXP\ is the subclass obtained by restricting the arithmetic circuits to be piecewise linear. 

As defined above, the classes \FIXP\ and \LinearFIXP\ consist of real-valued search problems, which means that reductions must specify a real-valued function mapping fixed points of the function $F$ to solutions of the search problem. In the case when $F$ is computed by a piecewise linear arithmetic circuit $C$, there exists rational-valued fixed points of polynomial bitsize in the size of $C$~\cite[Theorem~5.2]{EtessamiY2010-FIXP}, which allows the use of ordinary polynomial-time reductions. With this convention, Etessami and Yannakakis~\cite{EtessamiY2010-FIXP} showed that $\PPAD = \LinearFIXP~\cite[Theorem~5.4]{EtessamiY2010-FIXP}$.

\subsection{\PPAD-membership via convex optimization}
From the characterization $\PPAD = \LinearFIXP$, in order to prove Theorem~\ref{THM:PPAD}, it is sufficient to reduce the task of computing an ex-ante envy-free and Pareto-optimal lottery to that of computing a fixed point of a piecewise linear arithmetic circuit defined on an explicitly given convex polytope. 

Constructing such a suitable circuit from scratch can potentially be a very challenging task, as many existing proofs of \PPAD-membership in the literature give evidence of. Recently however, Filos-Ratsikas~et~al.~\cite{Filos-RatsikasH2021-FIXP,Filos-RatsikasH2023-PPAD} introduced a general technique for proving \FIXP\ and \PPAD-membership, by which the arithmetic circuit defining the fixed point search problem can be augmented with \emph{pseudo-gates} that solve very general convex optimization problems. By a pseudo-gate is meant a (multi-input and multi-output) gate that is only required to compute the correct output at a fixed point of the full circuit. More precisely, the pseudo-gate is implemented by an arithmetic circuit using auxiliary variables, and when these auxiliary variables are in a fixed point, the pseudo-gate computes the correct output.
\begin{definition}[Pseudo-circuit]
    A pseudo-circuit with $n$ inputs and $m$ outputs is an arithmetic circuit $C$ computing a function $F \colon \RR^n \times [0,1]^\ell \to \RR^m \times [0,1]^\ell$. The output of $C$ on input $x \in \RR^n$ is any $y \in \RR^m$ such that there exists $z \in [0,1]^\ell$ such that $F(x,z)=(y,z)$. The variables $z \in [0,1]^\ell$ are called auxiliary variables.
\end{definition}
By a pseudo-gate is simply meant the use of a pseudo-circuit as a sub-circuit of larger pseudo-circuit, and where the auxiliary variables of the pseudo-gate is augmented to the auxiliary variables of the larger pseudo-circuit. The simple but crucial observation about pseudo-circuits is that, for the purpose of proving \FIXP\ and \PPAD-membership they are just as good as normal arithmetic circuits.

In the setting of proving \PPAD-membership, Filos-Ratsikas~et~al.~\cite{Filos-RatsikasH2023-PPAD} developed a pseudo-gate, coined the \emph{linear-OPT-gate}, implemented as a piecewise linear arithmetic circuit, that in particular can be used to solve both the linear program~(\ref{EQ:MaxWeightedSumLP}) and the feasibility problem~(\ref{EQ:WeightConditionalLP}). For the linear program~(\ref{EQ:MaxWeightedSumLP}) this is possible since the coefficients of all linear constraints are constants and that the coefficients of the objective function are linear functions of the parameter variables $\w$. For the feasibility program
 (\ref{EQ:WeightConditionalLP}) this is possible since the coefficients of all linear
constraints are constants and the that the antecedents of the conditional linear
constraints are given by a strict linear inequalities for functions computable by piecewise linear circuits applied to the parameter variables $\q$. We provide precise statements of the capabilities of the linear-OPT-gate in Appendix~\ref{app:linearOPT}.

\subsection{Proof of Theorem~\ref{THM:PPAD}}
We finally show how our fixed point formulation for ex-ante envy-free and Pareto-optimal lotteries in conjunction with the framework of Filos-Ratsikas~et~al.~\cite{Filos-RatsikasH2023-PPAD} allows for a simple proof of \PPAD\ membership for the problem of computing such lotteries.

The fixed point formulation of Proposition~\ref{PROP:FixedPointFormulation} amounts to finding
$(\q,\p,\w)$ such that $(\q,\p)$ is an optimal solution of the linear program~(\ref{EQ:MaxWeightedSumLP}), parametrized by $\w$, and such that $w$ is a solution to the feasibility program of conditional linear
constraints~(\ref{EQ:WeightConditionalLP}), parametrized by $\q$.

We thus build a piecewise linear arithmetic pseudo-circuit $C$ accomplishing both tasks. The circuit $C$ takes as input the variables $(\q,\p,\w)$. Using the linear-OPT-gate of~\cite{Filos-RatsikasH2023-PPAD} we let $C$ output $(\q',\p',\w')$ such that:
\begin{enumerate}
    \item $(\q',\p')$ is an optimal solution of LP~(\ref{EQ:MaxWeightedSumLP}) parametrized by $w$.
    \item If the feasibility program~(\ref{EQ:WeightConditionalLP}) parametrized by $q$ is feasible, then $w'$ is a solution.
\end{enumerate} 
Suppose now that $(\q,\p,\w)$ is a fixed point of the circuit $C$  (where also the auxiliary inputs of $C$ are assumed to be in a fixed point). Since $(\q,\p)$ is then an optimal solution of LP~\ref{EQ:MaxWeightedSumLP}) parametrized by $\w$, this means that $\q$ is Pareto-optimal by the weighted sum method. From Observation~\ref{OBS:ParetoImpliesAcyclic} and Lemma~\ref{LEM:AcyclicImpliesSolvable} we then have that the feasibility program~(\ref{EQ:WeightConditionalLP}) parametrized by $\q$ is in fact feasible, and this then means that $\w$ is a solution. By Proposition~\ref{PROP:FixedPointFormulation} we can then conclude that $\q$ is an ex-ante envy-free and Pareto-optimal lottery.

We have thus reduced the task of computing an ex-ante envy-free and Pareto-optimal lottery to the task of computing a fixed point of a piecewise linear arithmetic pseudo-circuit defined on a explicitly given convex polytope, thereby completing the proof.

\section{An efficient algorithm for constant number of agents}
In this section, we develop a very simple polynomial time algorithm for computing an ex-ante envy-free and Pareto-optimal lottery when the number of agents is constant. Consider a fair division instance $\mathcal{I}$ consisting of $n$ agents, a set of $m$ partitions $\mathcal{P}=\{P^1,P^2,..., P^m\}$, and agent utilities $u_{ij}^k$ for $i,j\in [n]$ and $k\in [m]$. The algorithm begins with evaluating the agents' valuations in the $n!$ possible allocations for each partition $P^k$ for $k\in [m]$. That is, we obtain $n!$ utility profiles in $\mathbb{R}^n$ for each partition and $m\cdot n!$ utility profiles overall. The Pareto-optimal lotteries are formed by faces of the convex hull of these utility profiles. 

Since the dimension $n$ is constant, the convex hull can be computed in polynomial time~\cite{Chazelle1993OptimalConvexHull}. We may then enumerate over the faces forming the Pareto-frontier. For each of these faces, we compute a hyperplane $H$ that contains the face. For such a hyperplane $H = \{x \in \mathbb{R}^n \mid w_1 x_1 + w_2 x_2 + \dots + w_n x_n = w_0\}$, we can determine whether it contains an envy-free lottery by linear programming.
\begin{equation*}
  \begin{array}{ll@{}ll}
    \text{Find}  & (\q,\p) &\\
    \text{subject to}& 
    \displaystyle   u_i(\q;i)\geq u_i(\q;i')\quad\quad &\text{for all} \ i,i' \in [n]\\
    \displaystyle & \sum\limits_{i=1}^n w_i u_i(\q;i) = w_0\\
    \displaystyle & \sum\limits_{i=1}^n q^k_{ij} =p_k &\text{for all} \ j \in [n], k \in [m]  \\
    \displaystyle  & \sum\limits_{j=1}^n q^k_{ij} =p_k & \text{for all} \ i \in [n], k \in [m] \\
    \displaystyle    & \sum\limits_{k=1}^m{p_k}=1 & \\
    \displaystyle     & p_k\geq 0 & \text{for all}\ k \in [m] & \\
    \displaystyle     & q_{ij}^k\geq 0 & \text{for all} \ i,j \in [n], k \in [m] \\
  \end{array}
\end{equation*}

Since we know that there does exist an ex-ante envy-free and Pareto-optimal lottery, at least one of these linear programs must be feasible. The next statement summarizes the discussion above.

\begin{theorem}
    For fair division instances with a constant number of agents, an ex-ante envy-free and Pareto-optimal allocation can be computed in polynomial time. 
\end{theorem}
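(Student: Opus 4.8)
The plan is to make rigorous the paragraph preceding the theorem: reduce the problem to a polynomial-size family of linear programs, one per face of a convex polytope living in the fixed dimension~$n$.

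First I would build, for each partition $P^k$ and each permutation $\pi$ of its $n$ bundles, the utility profile $u^{k,\pi}=(u^k_{1\pi(1)},\dots,u^k_{n\pi(n)})\in\RR^n$; together these give a set $S$ of $N=m\cdot n!$ rational points, which is polynomial in the input size because $n$ is constant. Let $P=\mathrm{conv}(S)$. The structural fact driving everything is that $P$ is exactly the set of attainable expected-utility vectors $(u_1(\q;1),\dots,u_n(\q;n))$ of lotteries $\q$: a lottery induces a convex combination of the profiles in $S$, and conversely every convex combination of profiles in $S$ is attained by the corresponding distribution over allocations, which re-encodes into the succinct form $(\q,\p)$ (e.g.\ via Birkhoff--von~Neumann inside each partition, aggregated by the $p_k$). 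By the weighted-sum method recalled earlier, a lottery whose utility vector maximises $\sum_i w_i x_i$ over $P$ for some strictly positive $w$ is Pareto-optimal, and conversely every Pareto-optimal utility vector of the polytope $P$ arises this way. Hence the Pareto-optimal lotteries are precisely those whose utility vector lies on a face $F$ of $P$ whose normal cone contains a strictly positive vector.

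Using that the convex hull of $N$ points in fixed dimension has only polynomially many faces and can be computed in polynomial time~\cite{Chazelle1993OptimalConvexHull}, I would enumerate the faces $F$ of $P$; for each, I solve a small linear program in the variables $(w_1,\dots,w_n,w_0)$ to test whether the normal cone of $F$ contains a strictly positive vector and, if so, to output one such $w>0$ together with $w_0=\max_{x\in P}w\cdot x$. For every face passing this test I then solve the displayed feasibility linear program, which asks for a lottery $(\q,\p)$ that is ex-ante envy-free and satisfies $\sum_i w_i u_i(\q;i)=w_0$; by the choice of $(w,w_0)$ any feasible $\q$ maximises a strictly positive weighted sum of utilities over all lotteries and is therefore Pareto-optimal, while it is envy-free by construction, so the algorithm may return it. All steps run in polynomial time. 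To see that at least one of these linear programs is feasible, invoke the existence theorem of Cole and Tao (also implied by Theorem~\ref{THM:PPAD}): there is an ex-ante envy-free and Pareto-optimal lottery $\q^*$, its utility vector $x^*$ lies in the relative interior of some face $F^*$ of $P$, and Pareto-optimality of $x^*$ forces the normal cone of $F^*$ to contain a strictly positive vector; thus $F^*$ passes the test, and for the weight vector $w$ chosen for $F^*$ we have $w\cdot x^*=w_0$ (since $w$ lies in the normal cone of $F^*$), so $\q^*$ itself witnesses feasibility of the corresponding linear program.

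The step I expect to require the most care is the correspondence in the second paragraph between Pareto-optimal lotteries and faces of $P$: one must argue that it suffices (and is necessary) to work with faces admitting a \emph{strictly} positive normal — a maximiser of $w\cdot x$ with some $w_i=0$ need not be Pareto-optimal, as a two-dimensional triangle example shows — and that every genuinely Pareto-optimal utility vector lies in the relative interior of such a face, so that the existence theorem really does guarantee a feasible linear program. The remaining ingredients, namely computing a convex hull in fixed dimension and solving rational linear programs exactly, are standard and only account for the polynomial running time.
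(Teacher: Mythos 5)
Your proposal is correct and follows essentially the same route as the paper: compute the $m\cdot n!$ utility profiles, take their convex hull in the fixed dimension $n$, enumerate faces, and for each Pareto-frontier face solve a linear feasibility program asking for an envy-free lottery on the containing hyperplane. Your version is somewhat more explicit than the paper's terse sketch, in particular in making precise that the relevant faces are those whose normal cone contains a strictly positive vector (tested by a small auxiliary LP) and in spelling out why the Cole--Tao existence theorem guarantees feasibility of at least one of the programs, but the underlying argument is the same.
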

\section{Ex-ante envy-free and Pareto-optimal lotteries of high social welfare}\label{sec:np-hardness}
As our last technical contribution, we study the problem of optimizing social welfare over ex-ante envy-free and Pareto-optimal allocation lotteries and prove the following statement for its decision version. 

\begin{theorem} \label{thm:NPhard}
    The problem of, given a fair division instance with partition-based utilities and $K>0$, deciding whether there exists an ex-ante envy-free and Pareto-optimal allocation lottery of social welfare at least $K$ is \NP-complete.
\end{theorem}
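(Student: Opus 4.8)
\emph{Membership in \NP.} For each face $F$ of the Pareto frontier of the utility polytope $\{(u_i(\q;i))_i : (\q,\p)\text{ a lottery}\}$, the lotteries whose utility vector lies in $F$ form a face of the (polynomial-size) lottery polytope; intersecting with the ex-ante envy-freeness inequalities and the half-space $\mathrm{SW}(\q)\ge K$ gives a polytope of polynomial-size description. The qualifying lotteries are the union of these polytopes, so if a qualifying lottery exists, one exists with polynomial bit-size. Given a candidate $(\q,\p)$ one checks in polynomial time that it is a valid lottery, that $u_i(\q;i)\ge u_i(\q;i')$ for all $i,i'$, and that $\mathrm{SW}(\q)\ge K$; to certify Pareto-optimality one solves $\max\sum_{i=1}^n \delta_i$ subject to $\delta_i\ge 0$, $u_i(\widetilde\q;i)\ge u_i(\q;i)+\delta_i$ for all $i$, and $(\widetilde\q,\widetilde\p)$ a lottery, and checks that the optimum is $0$. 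Hence the problem lies in \NP.

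\emph{\NP-hardness: where the difficulty sits.} The point to exploit is that the two ingredients are individually easy but their combination is not: maximizing $\mathrm{SW}$ over the \emph{envy-free} lotteries is a linear program, and the $\mathrm{SW}$-optimum over the \emph{Pareto-optimal} lotteries equals the unconstrained optimum (the greedy assignment of each bundle to a highest-value agent is Pareto-optimal). The obstruction is that an envy-free lottery of maximum welfare may fail to be Pareto-optimal — the only way to raise a lagging agent's utility (as Pareto-optimality demands) may be to create envy — so the envy-free \emph{and} Pareto-optimal lotteries form a union of pieces, one per face of the Pareto frontier that meets the envy-free polytope, and the welfare attainable over this union is what must be decided.

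\emph{The reduction.} I would reduce from an \NP-complete problem with exactly this ``choose a consistent high-value subset'' flavour, for instance \textsc{Independent Set}: given $G=(V,E)$ and $t$, decide whether $G$ has an independent set of size $\ge t$. Build a fair division instance with a ``vertex partition'' $P^v$ for every $v\in V$ together with a few baseline partitions, and choose the partition-based utilities so that (i) there is a canonical envy-free and Pareto-optimal lottery of moderate welfare supported on the baseline partitions; (ii) shifting probability onto $P^v$ increases social welfare, so a lottery supported on $\{P^v:v\in S\}$ (plus baseline mass) has welfare an increasing affine function of $|S|$; and (iii) for each edge $(u,v)\in E$ there is a dedicated agent together with a pair of contested bundles arranged so that, when $P^u$ and $P^v$ both carry positive probability, any lottery that is Pareto-optimal makes that agent envious — so ``select both endpoints'' is incompatible with being simultaneously envy-free and Pareto-optimal. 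One then argues that the welfare-optimal envy-free and Pareto-optimal lotteries may be taken to be supported (apart from baseline mass) on a genuine independent set, whence such a lottery meets the welfare threshold $K$ iff $G$ has an independent set of size $\ge t$; combined with the membership argument this gives \NP-completeness.

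\emph{Main obstacle.} The crux is item (iii), i.e.\ enforcing the edge conflict without leaving the adversary a fractional escape. Since envy-freeness is a system of linear inequalities in $\q$, it cannot on its own detect the conjunction ``$p_u>0$ \emph{and} $p_v>0$'', so the conflict must be realized through the interaction of envy-freeness with Pareto-optimality: one must design the contested bundles so that any lottery placing mass on both $P^u$ and $P^v$ is either not envy-free or is strictly Pareto-dominated (using that a Pareto-optimal lottery has an acyclic envy graph and the weight inequalities of Lemma~\ref{LEM:WeightInequalityImpliesEnvyFree}), and then prove the accompanying integrality statement that the welfare-optimal envy-free and Pareto-optimal lotteries can be ``rounded'' to an independent set. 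Carrying this out while keeping the number of partitions and agents polynomial and respecting the anonymity property is where essentially all the work lies.
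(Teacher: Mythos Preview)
Your \NP-membership argument is essentially the same as the paper's: verify envy-freeness and the welfare bound directly, and certify Pareto-optimality by solving an LP that looks for a dominating lottery. Your extra remark about polynomial bit-size via faces of the Pareto frontier is a reasonable justification that the paper glosses over.

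For hardness, however, what you have is a plan, not a proof. You correctly identify that the whole difficulty is item~(iii) --- forcing a combinatorial constraint through the \emph{interaction} of envy-freeness and Pareto-optimality, since neither alone can express a conjunction like ``$p_u>0$ and $p_v>0$'' --- and you explicitly say that ``carrying this out\ldots is where essentially all the work lies.'' But you do not carry it out: no gadget is given, no utilities are specified, and no argument is made that fractional lotteries cannot evade the edge conflict. This is the entire content of the hardness proof, so as it stands there is a genuine gap.

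For comparison, the paper reduces from \textsc{Exact Cover by 3-Sets} rather than \textsc{Independent Set}, and the mechanism is not an edge-conflict gadget at all. Each triplet $S_j$ gets two ``real'' partitions $P_{j,1},P_{j,2}$ (interpreted as ``$S_j$ in the cover'' vs.\ ``not'') plus a dummy $P_{j,3}$; a family of \emph{base agents} with huge utilities forces $p_{j,1}+p_{j,2}+p_{j,3}\approx 1/t$ (Lemma~\ref{lem:uniform}); \emph{set agents} together with Pareto-optimality force almost all of that mass onto a single one of $P_{j,1},P_{j,2}$ (Lemmas~\ref{lem:PO} and~\ref{lem:binary_switch}); and for each universe element $e_i$ there are \emph{element agents} $v_i,w_i,z_i$ whose utilities are tuned so that $z_i$'s own expected utility is about $1/t$ regardless, but $z_i$ envies $w_i$ if $e_i$ is covered zero times and envies $v_i$ if it is covered at least twice (Lemma~\ref{lem:soundness}). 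So the ``exactly once'' constraint is encoded as a pair of envy inequalities on a counting quantity, not as a pairwise conflict; this sidesteps precisely the conjunction problem you flag. The construction requires three agent types, carefully chosen parameters $R\gg Q\gg 1\gg\eps$, and a chain of lemmas to control defective (non-canonical) allocations --- none of which your plan supplies or hints at how to supply.
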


It is easy to see that the above problem belongs to the complexity class \NP. First, notice that it is trivial to check whether a given lottery $\q$ is ex-ante envy-free and has social welfare at least $K$. To verify Pareto optimality, it suffices to search for another lottery $\widetilde{\q}$ which gives to any agent expected utility at least as high as her expected utility in $\q$, maximizing the total excessive utility, through the following linear program:
\begin{equation*}
  \begin{array}{ll@{}ll}
    \text{maximize}  & \sum\limits_{i=1}^n{t_i} &\\
    \text{subject to}& \displaystyle u_i(\widetilde{\q};i)\geq u_i(\q;i)+t_i\quad\quad &\text{for all} \ i \in [n]\\
    \displaystyle & \sum\limits_{i=1}^n \widetilde{q}^k_{ij} =\widetilde{p}_k &\text{for all} \ j \in [n], k \in [m]  \\
    \displaystyle  & \sum\limits_{j=1}^n \widetilde{q}^k_{ij} =\widetilde{p}_k & \text{for all} \ i \in [n], k \in [m] \\
    \displaystyle    & \sum\limits_{k=1}^m{\widetilde{p}_k}=1 & \\
    \displaystyle     & \widetilde{q}_{ij}^k\geq 0 & \text{for all} \ i,j \in [n], k \in [m] \\
    \displaystyle    & \widetilde{p}_k\geq 0 & \text{for all} \ k\in [m]\\
    \displaystyle   & t_i\geq 0 & \text{for all} \ i\in [n]
  \end{array}
\end{equation*}
Clearly, the lottery $\widetilde{\q}$ Pareto-dominates $\q$ if and only if the objective value of the above linear program is strictly positive.

For proving \NP-hardness, we will develop a polynomial-time reduction from the classic \NP-complete problem \emph{Exact Cover by 3-Sets} (X3C) \cite{Karp1972} to our problem. X3C is defined as follows:
\begin{quote}
\textit{Instance}: A universe $\mathcal{E}= \{e_1, e_2, \dots, e_r\}$ of $r$ elements, a family $\mathcal{S}=\{S_1, S_2, \dots, S_t\}$ of triplets from $\mathcal{E}$, i.e., $S_j \subseteq \mathcal{E}$ with $\abs{S_j}=3$ for all $j \in [t]$. 

\textit{Question:} Does there exist an exact cover, i.e., a set of $r/3$ triplets from $\mathcal{S}$ that includes all elements of the universe $\mathcal{E}$? 
\end{quote}

For $i\in [r]$, we let $f_i$ denote the frequency of occurrence of element $e_i$, i.e., $f_i := \abs{\{j:e_i \in S_j\}}$. 

\subsection{The reduction} \label{sec:construction}
Starting with an instance $\phi$ of X3C, our reduction constructs a fair division instance $\In$ as follows. Instance $\In$ has the following set of $n=t+1+2t^2+3r$ agents.
\begin{itemize}
    \item $t+1$ \emph{base agents} $b_0, b_1, b_2, \dots, b_t$, 
    \item $2t$ \emph{set agents} $h_{j,1}, h_{j,2}, \dots, h_{j,2t}$, for every $j\in [t]$,
    \item three \emph{element agents} $v_i,w_i,$ and $z_i$ for every $i\in [r]$ 
\end{itemize}
 The set $\mathcal{P}$ of admissible partitions of an underlying set of items consists of $m=3t$ partitions $P_{j,c}$ for $j \in [t]$ and $c \in [3]$. We identify the $n$ bundles of partitions in accordance to the type of agents. So, each partition has $t+1$ bundles $B_0, B_1, \dots, B_t$, $2t$ bundles $H_{j,1}, \dots, H_{j,2t}$ for every $j \in [t]$, and three bundles $V_i, W_i$, and $Z_i$ for every $i \in [r]$.


The utilities of the agents for the bundles of partition $P_{j,c}$ for $j \in [t]$ and $c \in [3]$ are given in the following table. The table includes only non-zero utilities; any utility that is not specified in the table is equal to zero. In our reduction, we use parameters $\eps = \frac{1}{12t^2}, R=\frac{6t^3}{\eps}$, and $Q=\frac{6t}{\eps}$.

    
  \begin{center}
     \begin{tabular}{llll}
\toprule
  $c$ & agent & bundle & utility\\\midrule
 any   &  $b_0$ & $B_0$ & $R/t$ \\
 & $b_0$ & $B_j$ & $R$ \\
  & $b_j$ & $B_j$ & $R$ \\ 
& $z_i$ for $i \in [r]:e_i \in S_j$ & $Z_i$ & $1/f_i$ \\ \midrule
  $1$ & $h_{j,1}$ & $H_{j,1}$ & $Q$ \\
 & $v_i$ for $i \in [r]:e_i \in S_j$ & $V_i$ & $2$ \\
 & $z_i$ for $i \in [r]:e_i \in S_j$ & $V_i$ & $2/3$ \\ \midrule
 $2$ & $h_{j,2}$ & $H_{j,2}$ & $Q$ \\
  & $w_i$ for $i \in [r]:e_i \in S_j$ & $W_i$ & $2$ \\ 
   & $z_i$ for $i \in [r]:e_i \in S_j$ & $W_i$ & $(1+1/f_i)/f_i$  \\ \midrule
  $3$ & $h_{j,1}$ & $H_{j,1}$ & $Q(1-\eps)$ \\
 & $h_{j,2}$ & $H_{j,2}$ & $Q(1-\eps)$ \\
 & $h_{j,\ell}$ for $\ell = 3, ..., t+1$ & $H_{j,1}$ & $\eps$ \\
  & $h_{j,\ell}$ for $\ell = t+2, ..., 2t $ & $H_{j,2}$ & $\eps$ \\
  & $v_i$ for $i \in [r]:e_i \in S_j$ & $V_i$ & $2$ \\
   & $w_i$ for $i \in [r]:e_i \in S_j$ & $W_i$ & $2$ \\ 
    &  $z_i$ for $i \in [r]:e_i \in S_j$ & $V_i$ & $2/3$\\
    &  $z_i$ for $i \in [r]:e_i \in S_j$ & $W_i$ & $(1+1/f_i)/f_i$\\
 \bottomrule 
\end{tabular}
\end{center}

The reduction is clearly computable in polynomial time. We shall, without loss of generality, assume in the following that $t\geq 9$ and $r\leq3t$; otherwise, it is trivial to decide~$\phi$.

\begin{definition}[Canonical allocation]
For any partition $P_{j,c}$ with $j \in [t]$ and $c \in [3]$, we define the \emph{canonical} allocation as follows: bundle $B_k$ is assigned to base agent $b_k$ for $k \in \{0,1,\dots,t\}$, bundle $H_{j,\ell}$ is assigned to set agent $h_{j,\ell}$ for $\ell \in [2t]$, and, finally, bundle $V_i$ is assigned to element agent $v_i$, bundle $W_i$ is assigned to element agent $w_i$, and $Z_i$ is assigned to element agent $z_i$ for $i\in [r]$.
%
\end{definition}

\subsection{Proof of Theorem~\ref{thm:NPhard}}
We now prove the correctness of our reduction. We remark that when we refer to the expected social welfare achieved by a set $F$ of agents in a lottery $\q$, we refer to the sum of the expected utilities of agents in $F$ in $\q$. We begin by presenting two simple technical lemmas.



\begin{lemma} \label{lem:SW_set_element}
    Consider an ex-ante envy-free lottery of instance $\In$. For $j \in [t]$, the expected utility the set and element agents can get from each of the partitions $P_{j,1}$ or $P_{j,2}$, conditioned on the partition being the outcome of the lottery, is at most $Q+9$. Similarly, the expected utility the set and element agents can get from the partition $P_{j,3}$, conditioned on the partition being the outcome of the lottery, is at most $Q/2+9$.
\end{lemma}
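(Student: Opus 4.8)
The plan is to directly bound, for a fixed $j \in [t]$, the total expected utility that all set agents $h_{j,1},\dots,h_{j,2t}$ and all element agents $v_i,w_i,z_i$ ($i\in[r]$) derive from a single partition, say $P_{j,1}$, conditioned on $P_{j,1}$ being the realized outcome. Since utilities within a partition depend only on the (permutation) allocation of that partition's bundles, I would first argue that conditioned on $P_{j,1}$, the lottery induces a doubly-stochastic assignment of the bundles $\{B_k\},\{H_{j,\ell}\},\{V_i,W_i,Z_i\}$ to the agents; by Birkhoff--von Neumann this is a convex combination of permutations, so it suffices to bound the set-plus-element social welfare of any single permutation allocation of $P_{j,1}$ and take the maximum.

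For partition $P_{j,1}$: the only bundle giving large utility to a set agent is $H_{j,1}$, worth $Q$ to $h_{j,1}$ (and $0$ to every other set agent in partitions of type $c=1$), so the set agents collect at most $Q$ total, attained only when $h_{j,1}$ receives $H_{j,1}$. For the element agents, I would inspect the table: in a $c=1$ partition the only nonzero element-agent utilities are $v_i \mapsto V_i$ worth $2$ and $z_i \mapsto V_i$ worth $2/3$, for the (at most three) indices $i$ with $e_i \in S_j$. Each bundle $V_i$ is assigned to exactly one agent, so from each such $i$ the element agents extract at most $2$ (giving $V_i$ to $v_i$ beats giving it to $z_i$), and using $|S_j|=3$ the element contribution is at most $6$. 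Thus the total is at most $Q + 6 \le Q+9$. The same argument for $P_{j,2}$ uses $h_{j,2}\mapsto H_{j,2}$ worth $Q$, and for each $i$ with $e_i\in S_j$ the bundle $W_i$ is worth $2$ to $w_i$ and $(1+1/f_i)/f_i \le 2$ to $z_i$ (since $f_i\ge1$), again $\le Q+6$.

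For partition $P_{j,3}$, the set-agent utilities are more spread out: $h_{j,1}\mapsto H_{j,1}$ worth $Q(1-\eps)$, $h_{j,2}\mapsto H_{j,2}$ worth $Q(1-\eps)$, the $t-1$ agents $h_{j,\ell}$ ($\ell=3,\dots,t+1$) each get $\eps$ from $H_{j,1}$, and the $t-1$ agents $h_{j,\ell}$ ($\ell=t+2,\dots,2t$) each get $\eps$ from $H_{j,2}$. Since $H_{j,1}$ and $H_{j,2}$ are each assigned to one agent only, the set agents collect at most $Q(1-\eps)+Q(1-\eps)$? — here I must be careful: the statement claims a bound of $Q/2+9$, which is much smaller than $2Q$, so the naive permutation bound is not enough and \emph{envy-freeness must be used}. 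This is the main obstacle. The point is that agents $h_{j,1},h_{j,2}$ have high-value outside options in the $c=1,2$ partitions, and ex-ante envy-freeness forces the lottery to place enough probability mass on those partitions (or otherwise compensate), which caps the \emph{conditional} expected utility of the high-value set agents in $P_{j,3}$. I would make this precise by writing the ex-ante envy constraints for $h_{j,1}$ and $h_{j,2}$ comparing their own expected utility to what they'd get from, e.g., base agent $b_0$'s bundle or each other's bundle, and combine with $\sum_k p_{j,k}\le 1$ and the definitions $\eps=\frac{1}{12t^2}$, $Q=6t/\eps$ to extract the factor-$1/2$ savings; the additive slack $9$ absorbs the element-agent contribution (again $\le 6$ by the same $|S_j|=3$ counting) plus the $O(t\eps\cdot \text{(bundle value)})$ low-order terms, using $t\eps = \frac{1}{12t}<1$. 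The arithmetic here — showing the envy constraints on $h_{j,1},h_{j,2}$ together with the total-probability constraint force the conditional utility down to $Q/2 + O(1)$ — is the step I expect to require the most care, and it is where the specific numerical choices of $R,Q,\eps$ are used.
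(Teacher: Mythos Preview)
Your treatment of $P_{j,1}$ and $P_{j,2}$ is essentially correct and matches the paper, though you overlooked one small term: in every partition (the ``any'' row of the utility table) the bundle $Z_i$ is worth $1/f_i\le 1$ to agent $z_i$ for each $e_i\in S_j$, so the element agents can collect up to $6+3=9$ rather than $6$. This does not affect the stated bound $Q+9$.

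For $P_{j,3}$, however, your plan takes the wrong route. You propose to use the ex-ante envy constraints of the \emph{high-value} set agents $h_{j,1}$ and $h_{j,2}$, reasoning via their ``outside options'' in $P_{j,1}$ and $P_{j,2}$ together with $\sum_c p_{j,c}\le 1$. This cannot work: the quantity to be bounded is the \emph{conditional} expected utility given $P_{j,3}$, which depends only on the ratios $q^{j,3}_{\cdot,\cdot}/p_{j,3}$ and is entirely decoupled from $p_{j,1},p_{j,2}$. The envy constraints of $h_{j,1}$ (or $h_{j,2}$) against $b_0$ or against each other mix probabilities across partitions and give no handle on $q^{j,3}_{h_{j,1},H_{j,1}}/p_{j,3}$ alone; indeed, if $p_{j,1}=p_{j,2}=0$ those constraints are vacuous for this purpose.

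The paper instead uses the envy-freeness of the \emph{low-value} set agents $h_{j,3},\dots,h_{j,t+1}$. Each such agent values \emph{only} the bundle $H_{j,1}$ in partition $P_{j,3}$ (and nothing anywhere else), so her envy-freeness toward $h_{j,1}$ reads simply $q^{j,3}_{h_{j,\ell},H_{j,1}}\ge q^{j,3}_{h_{j,1},H_{j,1}}$. Summing over the $t-1$ such agents and using $\sum_a q^{j,3}_{a,H_{j,1}}=p_{j,3}$ forces $q^{j,3}_{h_{j,1},H_{j,1}}/p_{j,3}\le 1/t$; symmetrically for $h_{j,2}$ and $H_{j,2}$. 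Hence the set agents' conditional contribution is at most $\tfrac{2Q(1-\eps)}{t}+\tfrac{2(t-1)\eps}{t}$, and adding the element-agent bound (here at most $15$, since both $V_i$ and $W_i$ carry value in $P_{j,3}$) and using $t\ge 9$, $Q\ge 36$ gives $\le Q/2+9$. The entire purpose of the $2t-2$ auxiliary set agents in the construction is precisely to enable this argument; your plan ignores them.
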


\begin{proof}
    Consider a lottery $\q$ of instance $\In$ and let $j \in [t]$. In partitions $P_{j,1}$ and $P_{j,2}$, the set agents $h_{j,1}$ and $h_{j,2}$ can get a utility of at most $Q$, while the maximum utility from the element agents is $2$ from each of the six bundles $V_i$ or $W_i$ for each $e_i \in S_j$ and $1$ from each of the three bundles $Z_i$ for $e_i \in S_j$. Overall, the expected utility set and element agents get from each of the partitions $P_{j,1}$ and $P_{j,2}$, conditioned on the partition being the outcome of the lottery, is at most $Q+9$.

Now, assume that the lottery $\q$ is ex-ante envy-free. Consider partition $P_{j,3}$ and observe that the set agents $h_{j,1}, h_{j,3}, \dots, h_{j,t+1}$ have utility only for the bundle $H_{j,1}$ of partition $P_{j,3}$. Due to ex-ante envy-freeness, all these agents receive bundle $H_{j,1}$ with conditional probability $1/t$. Similarly, all set agents $h_{j,2}, h_{j,t+2}, \dots, h_{j,2t}$ receive bundle $H_{j,2}$ with conditional probability $1/t$. Also, the maximum utility that can be obtained in partition $P_{j,3}$ from the element agents is $2$ from each of the six bundles $V_i$ and $W_i$ for $e_i \in S_j$, and $1$ from each of the three bundles $Z_i$ for $e_i \in S_j$. Overall, using our assumption $t \geq 9$, which clearly also gives $Q \geq 36$ (recall that $Q=\frac{6t}{\eps}=72t^3$), we have that the expected utility set and element agents get from partition $P_{j,3}$, conditioned on the partition being the outcome of the lottery, is $\frac{2Q}{t} \cdot (1-\eps) + \frac{2t-2}{t} \cdot \eps +15 \leq \frac{Q}{3}+15 \leq \frac{Q}{2}+9$.
\end{proof}

\begin{lemma} \label{lem:base}
    In instance $\In$, for any partition $P_{j,c}$ with $j \in [t]$ and $c \in [3]$, any allocation in the support of a Pareto-optimal lottery, either assigns bundle $B_j$ to agent $b_0$ or assigns bundle $B_0$ to agent $b_0$ and bundle $B_j$ to agent $b_j$.
 \end{lemma}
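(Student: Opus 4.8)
The statement is about the structure of the base agents in any Pareto-optimal lottery. I would prove it by a local-exchange / Pareto-domination argument: take a Pareto-optimal lottery $\q$, suppose some allocation in its support violates the claim for a partition $P_{j,c}$, and construct another lottery $\widetilde\q$ that Pareto-dominates $\q$, contradicting Pareto-optimality. The relevant utilities are only those in the top block of the table: $u(b_0,B_0)=R/t$, $u(b_0,B_j)=R$, $u(b_j,B_j)=R$, and $b_0,\dots,b_t$ have no utility for any other bundle, in any partition. So within a single allocation over a partition $P_{j,c}$, the only decisions affecting the base agents are which of the $t+1$ bundles $B_0,B_1,\dots,B_t$ go to $b_0$ and which to $b_j$.

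\textbf{Key steps.} First I would observe that since $b_0$ only values bundles of the form $B_k$ and $b_k$ only values $B_k$ (this holds across all partitions, and the values are the same in every partition for these agents), the contribution of the base agents to any agent's expected utility depends only on the marginal probabilities with which $b_0$ receives each $B_k$ and with which $b_j$ receives $B_j$. Second, fix an allocation $\sigma$ in the support of $\q$ over partition $P_{j,c}$ that is "bad": either $b_0$ gets some bundle other than $B_0$ or $B_j$, or $b_0$ gets $B_0$ but $b_j$ does not get $B_j$. In the first sub-case, if $b_0$ gets $B_k$ with $k\notin\{0,j\}$, then $b_0$ is getting $R$ from a bundle it could also get $R$ from at $B_j$, but more importantly consider: if instead $b_0$ gets $B_0$ ($R/t$) and $b_j$ gets $B_j$ ($R$) — that is worse for $b_0$ but better for $b_j$, so that is not yet a Pareto improvement. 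The right move: if $b_0$ gets a bundle $B_k$ with $k\ne j$ and $k\ne 0$ in $\sigma$, then swapping so that $b_0$ gets $B_j$ (still value $R$ for $b_0$, no change) and whoever had $B_j$ gets $B_k$ — but $B_k$ for $k\ne 0$ is valued $R$ only by $b_0$ and $b_k$, so this may hurt $b_k$. I would therefore argue more carefully: in a Pareto-optimal lottery the marginal probability that $b_0$ receives a bundle it values $R$ (i.e.\ some $B_\ell$, $\ell\ge 1$) plus the marginal probability that $b_0$ receives $B_0$ is constrained, and one shows that pushing all of $b_0$'s mass onto $B_j$-type bundles while giving $b_j$ the bundle $B_j$ whenever possible only increases utilities. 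Concretely: whenever $b_0$ receives $B_k$ with $k\notin\{0,j\}$, reassign within that allocation to give $b_0$ the bundle $B_j$ and give $B_k$ to $b_j$ if $b_j$ currently holds something else — but $b_j$ only values $B_j$, so if $b_j$ doesn't get $B_j$ it gets $0$ regardless; hence we may freely shuffle the $B_\ell$'s among the non-$b_0$ base agents. The cleanest formulation: claim that WLOG in the support of a Pareto-optimal lottery, $b_0$ always gets $B_0$ or $B_j$, because replacing "$b_0$ gets $B_k$, $k\notin\{0,j\}$" by "$b_0$ gets $B_j$, and the agent who had $B_j$ gets $B_k$" leaves $b_0$ unchanged at $R$, and the reassigned agent is some $b_\ell$ with $\ell\ne 0$, who had $0$ from $B_j$ and now may have $R$ from $B_k$ only if $\ell=k$ — choosing the target to be $b_k$, whoever currently holds $B_k$'s slot — a careful permutation argument shows all base agents are weakly better off and $b_k$ strictly better, contradicting Pareto-optimality unless the allocation already has $b_0\in\{B_0,B_j\}$.

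\textbf{Finishing and main obstacle.} Once we know $b_0$ gets $B_0$ or $B_j$ in every supported allocation, the remaining case is "$b_0$ gets $B_0$ but $b_j$ does not get $B_j$": then $b_j$ gets $0$, while $B_j$ (worth $R$ to $b_j$) is held by someone who values it at most $R$ and, if that someone is $b_0$, contradicts $b_0\in\{B_0,B_j\}$ together with $b_0$ already having $B_0$; so $B_j$ is held by some agent who is not a base agent and hence values it $0$. Swapping $B_j$ into $b_j$'s hands (and pushing the displaced bundle to wherever $b_j$ sat) strictly improves $b_j$ and changes no one else's utility, again contradicting Pareto-optimality. I expect the main obstacle to be the bookkeeping in the permutation argument of the previous paragraph: one must verify that the single local exchange on one allocation in the support, combined over the lottery, genuinely yields a Pareto improvement and never accidentally decreases some agent's expected utility — this requires being explicit that base agents are indifferent to everything except their own $B$-bundles and that all other agents have zero utility for $B$-bundles in every partition (readable directly from the table). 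I would phrase the whole argument as "modify the lottery within each supported allocation" rather than "shift probability mass between partitions," which keeps the anonymity constraints trivially satisfied.
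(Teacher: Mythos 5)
You have misread the utility table, and this derails the first (and harder) half of the argument. The utilities are \emph{partition-based}: in partition $P_{j,c}$, the only nonzero base-agent utilities are $u_{b_0}(B_0)=R/t$, $u_{b_0}(B_j)=R$, and $u_{b_j}(B_j)=R$. In particular, for $k\notin\{0,j\}$, the bundle $B_k$ is worth $0$ to $b_0$ (and to every base agent) \emph{when the partition is $P_{j,c}$}. Your argument repeatedly assumes the opposite: you write that if $b_0$ gets $B_k$ with $k\notin\{0,j\}$ then $b_0$ ``is getting $R$,'' that swapping $b_0$ to $B_j$ ``leaves $b_0$ unchanged at $R$,'' and that $b_\ell$ ``may have $R$ from $B_k$ only if $\ell=k$.'' All of these are false for the partition $P_{j,c}$ at hand. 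Consequently your claim that giving $b_0$ bundle $B_0$ instead ``is worse for $b_0$'' is also wrong: $b_0$ would go from $0$ to $R/t>0$, a strict \emph{improvement}. The elaborate permutation argument you then attempt (reroute $B_k$ to $b_k$, etc.) is built on this misreading and does not establish what you need.

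Once the table is read correctly, the first case is immediate and needs no permutation bookkeeping: if the support contains an allocation in $P_{j,c}$ where $b_0$ holds neither $B_0$ nor $B_j$, swap $b_0$ with whoever currently holds $B_0$. Then $b_0$ goes from utility $0$ to $R/t>0$ (strict improvement), and the other agent goes from $0$ (only $b_0$ values $B_0$) to at least $0$; no one else changes. Shifting probability mass from the old allocation to the swapped one is a Pareto improvement, contradicting Pareto-optimality. This is exactly the paper's argument. Your second case (when $b_0$ holds $B_0$ but $b_j$ does not hold $B_j$) is essentially right in spirit, although your claim that $B_j$ must then be held by ``some agent who is not a base agent'' is unjustified as stated — it could be $b_\ell$ for $\ell\neq 0,j$; what matters, and what you should say, is that any holder other than $b_0$ or $b_j$ values $B_j$ at $0$ in partition $P_{j,c}$. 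With that correction, the swap-with-$b_j$ argument you give is the one the paper uses.
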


\begin{proof}
    Consider a Pareto-optimal lottery $\q$ and assume, for the sake of contradiction, that it has in its support an allocation in partition $P_{j,c}$ for $j\in [t]$ and $c\in [3]$ which assigns to the base agent $b_0$ neither bundle $B_0$ nor bundle $B_j$. Then, since the base agent $b_0$ is the only one who can get positive utility from bundle $B_0$, the lottery $\widetilde{\q}$, which moves probability mass from the above allocation to the one in which the agent who gets bundle $B_0$ and the base agent $b_0$ have their bundles swapped, Pareto-dominates $\q$, contradicting its Pareto-optimality.

Now, assume that $\q$ has in its support an allocation in partition $P_{j,c}$ for $j\in [t]$ and $c\in [3]$, in which the base agent $b_0$ is assigned to bundle $B_0$ but bundle $B_j$ is not assigned to the base agent $b_j$. Then, since the base agent $b_j$ is the only agent besides $b_0$ who has positive utility for bundle $B_j$ at partition $P_{j,c}$, the lottery $\widetilde{\q}$, which moves probability mass from this allocation to the one in which the agent who gets bundle $B_j$ and the base agent $b_j$ have their bundles swapped, Pareto-dominates $\q$, again contradicting its Pareto-optimality. The lemma follows. 
\end{proof}

In the statements and proofs below, for a given lottery, we denote by $p_{j,c}$ the probability of partition $P_{j,c}$ being the outcome of the lottery, for $j\in [t]$ and $c\in [3]$.

The next lemma shows that ex-ante envy-free lotteries of high social welfare must place close to total probability~$\frac{1}{t}$ on the partitions $P_{j,1}$, $P_{j,2}$ and $P_{j,3}$, for each~$j\in [t]$.

\begin{lemma} \label{lem:uniform}
    In any ex-ante envy-free and Pareto-optimal lottery of instance $\In$ in which the base agents have social welfare at least $R+R/t+r/t-3$, it holds that $p_{j,1}+p_{j,2}+p_{j,3} \in [\frac{1-\eps}{t},\frac{1+\eps}{t}]$, for each $j \in [t]$.
\end{lemma}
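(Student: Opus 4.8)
The plan is to analyse, separately within each ``group'' $j\in[t]$ of the three partitions $P_{j,1},P_{j,2},P_{j,3}$, how the total probability mass $\pi_j := p_{j,1}+p_{j,2}+p_{j,3}$ splits between the two base-agent configurations allowed by Lemma~\ref{lem:base}. I would denote by $\gamma_j$ the probability that the realised partition lies in group $j$ \emph{and} agent $b_0$ receives $B_0$ (equivalently, by Lemma~\ref{lem:base}, that $b_j$ receives $B_j$), and set $\delta_j := \pi_j-\gamma_j\ge 0$, the remaining mass, on which $b_0$ receives $B_j$ instead. Write $\beta := \sum_{j=1}^t\gamma_j$ and $\eta := 1-\beta$; note that $\sum_j\delta_j=\eta$.

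First I would compute the base agents' social welfare \emph{exactly} in terms of $\beta$. In group $j$, agent $b_0$ earns $R/t$ on the $\gamma_j$-mass and $R$ on the $\delta_j$-mass (it always holds $B_0$ or $B_j$ by Lemma~\ref{lem:base}, and these are its only positively-valued bundles in $P_{j,c}$), while $b_j$ earns $R$ precisely on the $\gamma_j$-mass (its only positively-valued bundle anywhere is $B_j$, which in every supported allocation is held by $b_0$ or $b_j$). Every other base agent earns nothing outside its own group by the same reasoning. Summing, the base agents' social welfare equals $R\sum_j(\gamma_j+\delta_j)+\tfrac{R}{t}\sum_j\gamma_j = R+\tfrac{R}{t}\beta$, and likewise $u_{b_0}(\q;b_0)=R\eta+\tfrac{R}{t}(1-\eta)$.

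Combining this identity with the hypothesis gives $\tfrac{R}{t}\beta\ge\tfrac{R}{t}+\tfrac{r}{t}-3$, hence $\eta\le\tfrac{3t-r}{R}\le\tfrac{3t}{R}=\tfrac{1}{24t^4}$, using $r\le 3t$ and $R=72t^5$; so $\eta$ is tiny. Next I would invoke ex-ante envy-freeness of $b_0$ towards $b_j$: since $b_j$ holds $B_j$ (valued $R$ by $b_0$) with probability $\gamma_j$, we get $R\eta+\tfrac{R}{t}(1-\eta)=u_{b_0}(\q;b_0)\ge u_{b_0}(\q;b_j)\ge R\gamma_j$, that is $\gamma_j\le\eta+\tfrac{1-\eta}{t}$ for all $j$. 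The two claimed bounds then follow by elementary estimates. For the upper bound, $\pi_j=\gamma_j+\delta_j\le\bigl(\eta+\tfrac{1-\eta}{t}\bigr)+\eta\le\tfrac{1}{t}+2\eta\le\tfrac{1}{t}+\tfrac{1}{12t^4}\le\tfrac{1+\eps}{t}$. For the lower bound, fixing $j_0$ and bounding the other $t-1$ terms, $\pi_{j_0}\ge\gamma_{j_0}=(1-\eta)-\sum_{j\neq j_0}\gamma_j\ge(1-\eta)-(t-1)\bigl(\eta+\tfrac{1-\eta}{t}\bigr)=\tfrac{1}{t}-\eta\tfrac{t^2-t+1}{t}\ge\tfrac{1}{t}-\tfrac{1}{24t^3}\ge\tfrac{1-\eps}{t}$.

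The only genuinely delicate step is the exact accounting in the second paragraph: everything hinges on Lemma~\ref{lem:base} pinning down exactly which bundles each base agent can possibly hold, so that the base social welfare is the \emph{exact} affine function $R+\tfrac{R}{t}\beta$ of the single quantity $\beta$ rather than merely a bound on it — this is what converts a social-welfare \emph{lower} bound into an \emph{upper} bound on $\eta$. Once that identity is secured, the rest is routine estimation that only uses $R$ being polynomially large relative to $t$ and $r$.
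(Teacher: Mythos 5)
Your proof is correct and follows essentially the same route as the paper's: decompose by which of the two base-agent configurations from Lemma~\ref{lem:base} occurs, express the base social welfare exactly as $R+\tfrac{R}{t}\theta$, use the hypothesis to make $1-\theta$ tiny, and use envy-freeness of $b_0$ towards $b_j$ for the per-$j$ bound. The only cosmetic difference is that you derive the paper's inequality $p_j\le\theta_j+\tfrac{3t-r}{R}$ directly from $\delta_j\ge 0$ and $\sum_j\delta_j=\eta$ rather than by the paper's short contradiction argument, which is a slight simplification of an identical step.
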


\begin{proof}
    Consider an ex-ante envy-free and Pareto-optimal lottery $\q$ of social welfare at least $R+R/t+r/t-3$ for the base agents. For $j \in [t]$, denote by $\theta_j$ the total probability that bundle $B_0$ is assigned to agent $b_0$ in partitions $P_{j,1}, P_{j,2}$, and $P_{j,3}$. Also, let $\theta=\sum_{j \in [t]}\theta_j$ and $p_j=p_{j,1}+p_{j,2}+p_{j,3}$ for $j \in [t]$; clearly, $\sum_{j\in [t]}{p_j}=1$. By Pareto-optimality and Lemma~\ref{lem:base}, agent $b_0$ gets bundle $B_j$ in partitions $P_{j,1}, P_{j,2}$, and $P_{j,3}$ with total probability $p_j - \theta_j$ for each $j\in [t]$. Then, the expected utility of agent $b_0$ is 
    \[
    u_{b_0}(\q;b_0)=\theta \cdot R/t+ \sum_{j \in [t]}(p_j-\theta_j) \cdot R = \theta \cdot R/t + (1-\theta)\cdot R \enspace .
    \]

    Now, consider the base agent $b_j$ for $j \in [t]$. By Pareto-optimality and Lemma~\ref{lem:base}, this agent gets bundle $B_j$ with total probability $\theta_j$ in partitions $P_{j,1}, P_{j,2}$, and $P_{j,3}$ (i.e., whenever agent $b_0$ gets bundle $B_0$). Hence, we have $u_{b_0}(\q;b_j)=\theta_j \cdot R$. By ex-ante envy-freeness of the lottery $\q$, we have 
    \[
    \theta \cdot (R/t) + (1- \theta) \cdot R=u_{b_0}(\q;b_0) \geq u_{b_0}(\q;b_j) = \theta_j \cdot R \enspace ,
    \] i.e., 
    \begin{align} \label{eq:1}
        \theta_j \leq 1-\theta\cdot (1-1/t) 
    \end{align}
  for all $j \in [t]$.
   The expected utility of the base agent $b_j$ in partitions $P_{j,1}, P_{j,2}$, and $P_{j,3}$  is $\theta_j \cdot R$. In total, the social welfare of the base agents is
   \[
   u_{b_0}(\q;b_0)+ \sum_{j \in [t]}u_{b_j}(\q;b_j) = \theta \cdot R/t+  (1-\theta) \cdot R + \sum_{j \in [t]}\theta_j \cdot R = R+ \theta \cdot R/t \enspace .
   \]
   Since the social welfare of the base agents is at least $R+R/t+r/t-3$, we have 
   \begin{align} \label{eq:2}
       \theta \geq 1 - \frac{3t-r}{R}.
   \end{align}
   We now claim that 
   \begin{align} \label{eq:3}
       p_j \leq \theta_j+ \frac{3t-r}{R}
   \end{align}
    for every $j \in [t]$. Indeed, assume that this is not the case and, instead, $\theta_{j^*} < p_{j^*} - \frac{3t-r}{R}$ for some $j^* \in [t]$. Using the inequality $\theta_j \leq p_j$ for $j \in [t] \setminus \{j^*\}$, and summing these inequalities up, we get $\theta = \sum_{j \in [t]} \theta_j< \sum_{j \in [t]}p_j - \frac{3t-r}{R}= 1- \frac{3t-r}{R}$, contradicting inequality~(\ref{eq:2}). Using equations~(\ref{eq:3}), (\ref{eq:1}), and (\ref{eq:2}) (in this order), we have 
    \begin{align*}
        p_j &\leq \theta_j+\frac{3t-r}{R} \leq 1-\theta \cdot (1-1/t)+ \frac{3t-r}{R}  \leq 1- \left(1-\frac{3t-r}{R}\right) \cdot \left(1-1/t\right) + \frac{3t-r}{R} \\
        &= \frac{1}{t} + \frac{3t-r}{R} \cdot \left(2-\frac{1}{t}\right)  \leq \frac{1}{t} + \frac{\eps}{t^2},
        \end{align*}
        which clearly implies the desired upper bound on $p_j$. The last inequality follows by the definition of $R$ (recall that $R=\frac{6t^3}{\eps}$). Then, 
        \begin{align*}
            p_j = 1-\sum_{j'\in [t]\setminus\{j\}}p_{j'} \geq  1-(t-1)\cdot \left(\frac{1}{t}+\frac{\eps}{t^2}\right) \geq \frac{1-\eps}{t},
        \end{align*}
        which completes the proof.
 \end{proof}

Our next technical lemma shows that in Pareto-optimal lotteries, almost all of the total probability given to the two partitions~$P_{j,1}$ and $P_{j,2}$ is given to one of them.

\begin{lemma} \label{lem:PO}
    Any Pareto-optimal lottery in instance $\In$ satisfies $\max\{p_{j,1},p_{j,2}\} \geq (1-~\eps)(p_{j,1}+p_{j,2})$, for all $j \in [t]$.
\end{lemma}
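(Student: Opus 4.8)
The idea is a simple probability-shifting (Pareto-domination) argument, mirroring the style of Lemmas~\ref{lem:base}. Suppose for contradiction that some $j \in [t]$ has $\max\{p_{j,1},p_{j,2}\} < (1-\eps)(p_{j,1}+p_{j,2})$, which forces \emph{both} $p_{j,1} > \eps(p_{j,1}+p_{j,2})$ and $p_{j,2} > \eps(p_{j,1}+p_{j,2})$, i.e.\ both partitions carry a non-trivial share of the mass on $\{P_{j,1},P_{j,2}\}$. I would then compare the per-unit-probability contribution of partitions $P_{j,1}$, $P_{j,2}$, and $P_{j,3}$ to the agents, and argue that shifting probability mass off of one of $P_{j,1},P_{j,2}$ (whichever is smaller) onto $P_{j,3}$ produces a new lottery that weakly helps every agent and strictly helps at least one, contradicting Pareto-optimality.

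\medskip
The key computations I would carry out, in order: First, using Lemma~\ref{lem:base} to pin down the base agents' bundles, observe that the base agents' utilities depend only on the per-$j$ totals $p_{j,1}+p_{j,2}+p_{j,3}$ and the probabilities with which $b_0$ gets $B_0$ versus $B_j$, but \emph{not} on how the mass is split among $c\in\{1,2,3\}$ once those are fixed; so the shift can be arranged to leave the base agents' utilities unchanged (keep the total $p_j$ and the $\theta_j$-type quantities fixed by re-balancing). Second, examine the set agents $h_{j,1},h_{j,2}$ (and the $h_{j,\ell}$, $\ell\ge 3$): in $P_{j,1}$ agent $h_{j,1}$ gets $Q$ for $H_{j,1}$; in $P_{j,2}$ agent $h_{j,2}$ gets $Q$ for $H_{j,2}$; in $P_{j,3}$ agent $h_{j,1}$ gets only $Q(1-\eps)$ and $h_{j,2}$ only $Q(1-\eps)$, so moving mass from $P_{j,1}$ or $P_{j,2}$ to $P_{j,3}$ might \emph{hurt} one of these two — this is the obstacle (see below). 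Third, note the element agents $v_i,w_i,z_i$ for $e_i \in S_j$ are essentially indifferent between the three partitions in the relevant sense (the utilities for $V_i$, $W_i$, $Z_i$ are arranged to match up), so they are not the bottleneck. Fourth, conclude the contradiction.

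\medskip
The main obstacle is precisely that a naive shift $P_{j,1}\to P_{j,3}$ costs agent $h_{j,1}$ utility (from $Q$ down toward $Q(1-\eps)$ on its mass), so it does not obviously Pareto-dominate. The resolution I anticipate is to shift mass between $P_{j,1}$ and $P_{j,2}$ (not to $P_{j,3}$): moving $\delta$ units of probability from, say, $P_{j,2}$ to $P_{j,1}$ (assuming $p_{j,2}\le p_{j,1}$). Under such a shift, within $P_{j,1}$ agent $h_{j,1}$ gains $Q$ per unit, while the only agents who could lose are $h_{j,2}$ (loses its $Q$ on the vacated $P_{j,2}$ mass) and the element agents. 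But crucially, $h_{j,2}$ has positive utility also in $P_{j,3}$; the point is that by Pareto-optimality $h_{j,2}$ must already be ``compensated'' — more carefully, I would argue that if both $p_{j,1},p_{j,2}$ were bounded below by $\eps(p_{j,1}+p_{j,2})$, one can redistribute \emph{simultaneously} a tiny bit from $P_{j,2}$ to $P_{j,1}$ \emph{and} correct the $h_{j,2}$ loss using $P_{j,3}$ mass (the $\eps$-slack in $Q(1-\eps)$ vs.\ $Q$ is exactly calibrated to $\eps = \frac{1}{12t^2}$), netting a strict gain for $h_{j,1}$ at no cost to anyone. Making this bookkeeping precise — choosing the exact transfer amounts so that every one of the $n = t+1+2t^2+3r$ agents is weakly better off — is the technical heart; I expect it to reduce to verifying a handful of linear inequalities in $p_{j,1},p_{j,2},p_{j,3},\eps$, all comfortably satisfied given the parameter choices $\eps=\frac{1}{12t^2}$, $Q=72t^3$, $R=\frac{6t^3}{\eps}$ and the standing assumption $t\ge 9$.
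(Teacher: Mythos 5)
Your plan correctly identifies the shape of the argument (a probability-shifting contradiction against Pareto-optimality) and correctly locates the crux in the set agents $h_{j,1},h_{j,2}$, but the specific shift you propose is the wrong one, and the idea you miss is exactly the one the paper's proof turns on.

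The paper does \emph{not} shift mass between $P_{j,1}$ and $P_{j,2}$, nor does it shift mass from only one of them to $P_{j,3}$. It moves \emph{all} of $p_{j,1}+p_{j,2}$ to the canonical allocation of $P_{j,3}$ simultaneously. The point you flagged as an ``obstacle'' --- that $P_{j,1}\to P_{j,3}$ alone would cost $h_{j,1}$ --- is precisely what the combined shift resolves, with no side correction needed: after the move, $h_{j,1}$ receives $Q(1-\eps)(p_{j,1}+p_{j,2})$ from the enlarged $P_{j,3}$ mass, whereas before she received at most $Q\,p_{j,1}$ from $P_{j,1}$. The contradiction hypothesis $\max\{p_{j,1},p_{j,2}\}<(1-\eps)(p_{j,1}+p_{j,2})$ is \emph{exactly} the inequality $p_{j,1}<(1-\eps)(p_{j,1}+p_{j,2})$ and $p_{j,2}<(1-\eps)(p_{j,1}+p_{j,2})$, so \emph{both} $h_{j,1}$ and $h_{j,2}$ strictly gain, and no other agent loses. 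That is the whole proof; there are no transfer amounts to fine-tune.

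Your alternative --- moving $\delta$ from $P_{j,2}$ to $P_{j,1}$ --- strictly hurts $h_{j,2}$ (she gets $Q$ per unit in $P_{j,2}$ and nothing in $P_{j,1}$), and the proposed repair ``using $P_{j,3}$ mass'' is not spelled out and does not obviously work in either direction: moving $P_{j,3}$ mass \emph{to} $P_{j,2}$ would help $h_{j,2}$ but is a shift in the wrong direction for the contradiction (and hurts $h_{j,1}$ and the element agents who gain $4$ instead of $2$ from $V_i,W_i$ in $P_{j,3}$), while moving $P_{j,2}$ mass \emph{to} $P_{j,3}$ again raises the question of why $h_{j,2}$ doesn't lose. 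The missing observation is that doing the $P_{j,1}\to P_{j,3}$ and $P_{j,2}\to P_{j,3}$ shifts together, at full scale, makes the hypothesis ``both shares of $\{p_{j,1},p_{j,2}\}$ are non-trivial'' immediately self-defeating. Without that insight the bookkeeping you anticipate as ``a handful of linear inequalities'' would not close.
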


\begin{proof}
The claim is clear for $j \in [t]$ such that $p_{j,1}=0$ or $p_{j,2}=0$. So, consider a Pareto-optimal lottery $\q$ with $p_{j^*,1}>0$ and $ p_{j^*,2}>0$ for some $j^* \in [t]$. For the sake of contradiction, let $\max\{p_{j^*,1},p_{j^*,2}\} < (1-\eps)(p_{j^*,1}+p_{j^*,2})$.

We construct the lottery $\widetilde{\q}$ which has the same probability as $\q$ for every allocation in partition $P_{j,c}$ with $j \in [t] \setminus \{j^*\}$ and $c \in [3]$, probability $0$ for every allocation in partition $P_{j^*,1}$ and $P_{j^*,2}$, and a probability for the canonical allocation of partition $P_{j^*,3}$ that is $p_{j^*,1}+p_{j^*,2}$ higher than the corresponding probability in $\q$. In other words, compared to $\q$, $\widetilde{\q}$ has moved the probability mass of allocations of partitions $P_{j^*,1}$ and $P_{j^*,2}$ to the canonical allocation of partition $P_{j^*,3}$. Note that, the base agents, the set agents different than $h_{j^*,1}$ and $h_{j^*,2}$, and the element agents have at least as high expected utility in $\widetilde{\q}$ as in $\q$. Agent $h_{j^*,1}$ (respectively, $h_{j^*,2}$) gets utility (at most) $Q$ from partition $P_{j^*,1}$ (respectively, $P_{j^*,2}$), and utility $Q(1-\eps)$ from (the canonical allocation of) partition $P_{j^*,3}$. Hence, the increase of expected utility from $\q$ to $\widetilde{\q}$ for agents $h_{j^*,1}$ and $h_{j^*,2}$ is $Q(1-\eps)(p_{j^*,1}+p_{j^*,2}) - Q \cdot p_{j^*,1}$, and $Q(1-\eps)(p_{j^*,1}+p_{j^*,2}) - Q \cdot p_{j^*,2}$, respectively. By our assumption on $p_{j^*,1}$ and $p_{j^*,2}$, both quantities are strictly positive, contradicting the Pareto-optimality of $\q$.    
\end{proof}

Together, the lemmas above allow us to show that any ex-ante envy-free and Pareto-optimal lottery of high social welfare has an {\em almost combinatorial structure}. We remark that this is the crucial property of our reduction that essentially allows us to embed the combinatorial search space of X3C into the continuous space of allocation lotteries. Namely, for each $j \in [t]$, the lottery must give a probability mass of almost $1/t$ to one of the partitions $P_{j,1}$ or $P_{j,2}$, and a probability mass of almost $0$ to the other. This is stated more precisely in the following lemma.

\begin{lemma} \label{lem:binary_switch}
In instance $\In$, any ex-ante envy-free and Pareto-optimal lottery, in which the expected social welfare of the set and element agents is at least $Q+6+r/t$ and the expected social welfare of the base agents is at least $R+R/t+r/t-3$, satisfies $\max\{p_{j,1},p_{j,2}\} \geq \frac{1-3\eps}{t}$, $\min\{p_{j,1},p_{j,2}\} \leq \frac{2\eps}{t}$, and $p_{j,3}\leq \frac{\eps}{t}$ for each $j \in [t]$.
\end{lemma}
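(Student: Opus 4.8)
The plan is to derive all three inequalities purely by assembling Lemmas~\ref{lem:SW_set_element}, \ref{lem:uniform} and \ref{lem:PO}; there is no new combinatorial content, only bookkeeping with the parameters $\eps=\frac{1}{12t^2}$ and $Q=\frac{6t}{\eps}=72t^3$. Fix a lottery $\q$ satisfying the two social-welfare hypotheses, and write $p_j:=p_{j,1}+p_{j,2}+p_{j,3}$ for $j\in[t]$. Since every partition of $\In$ is of the form $P_{j,c}$ with $j\in[t]$ and $c\in[3]$, we have $\sum_{j\in[t]}p_j=\sum_{j,c}p_{j,c}=1$.

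First I would establish the bound $p_{j,3}\le\eps/t$. The expected social welfare of the set and element agents splits over the partitions as $\sum_{j\in[t]}\sum_{c\in[3]}s_{j,c}\,p_{j,c}$, where $s_{j,c}$ is their expected utility from $P_{j,c}$ conditioned on that partition being the outcome (with $s_{j,c}p_{j,c}=0$ when $p_{j,c}=0$). Plugging in the per-partition estimates of Lemma~\ref{lem:SW_set_element}, namely $s_{j,1},s_{j,2}\le Q+9$ and $s_{j,3}\le\frac{Q}{2}+9$, then substituting $p_{j,1}+p_{j,2}=p_j-p_{j,3}$ and using $\sum_j p_j=1$, collapses the double sum to the upper bound $Q+9-\frac{Q}{2}\sum_{j\in[t]}p_{j,3}$. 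Comparing with the hypothesis that this social welfare is at least $Q+6+r/t$ yields $\frac{Q}{2}\sum_{j}p_{j,3}\le 3-r/t\le 3$, hence $\sum_{j\in[t]}p_{j,3}\le 6/Q=\eps/t$ (using $Q=6t/\eps$ and $\eps=\frac{1}{12t^2}$, so $6/Q=\frac{1}{12t^3}=\eps/t$). Since the summands are nonnegative, $p_{j,3}\le\eps/t$ for every $j$, which is the third claimed inequality.

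Next, with $p_{j,3}\le\eps/t$ in hand, I would apply Lemma~\ref{lem:uniform} --- whose base-agent hypothesis is precisely the one assumed here --- to obtain $p_j\in[\frac{1-\eps}{t},\frac{1+\eps}{t}]$, so that $p_{j,1}+p_{j,2}=p_j-p_{j,3}\ge\frac{1-2\eps}{t}$. Lemma~\ref{lem:PO} then gives $\max\{p_{j,1},p_{j,2}\}\ge(1-\eps)(p_{j,1}+p_{j,2})\ge\frac{(1-\eps)(1-2\eps)}{t}\ge\frac{1-3\eps}{t}$, the first inequality. For the second, the same lemma rewritten as $\min\{p_{j,1},p_{j,2}\}=(p_{j,1}+p_{j,2})-\max\{p_{j,1},p_{j,2}\}\le\eps(p_{j,1}+p_{j,2})$, combined with $p_{j,1}+p_{j,2}\le p_j\le\frac{1+\eps}{t}$ and $\eps\le 1$, gives $\min\{p_{j,1},p_{j,2}\}\le\frac{\eps(1+\eps)}{t}\le\frac{2\eps}{t}$.

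I do not expect a genuine obstacle. The only points demanding care are keeping the conditional-versus-unconditional social-welfare bookkeeping straight in the first step (which is why the decomposition over the $P_{j,c}$ blocks should be written out carefully), and verifying that the fixed parameter values make the arithmetic land exactly where it must --- in particular that $6/Q=\eps/t$, and that the slack in the coarse estimates $(1-\eps)(1-2\eps)\ge 1-3\eps$ and $\eps(1+\eps)\le 2\eps$ suffices.
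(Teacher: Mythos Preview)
Your proposal is correct and follows essentially the same route as the paper: both derive $\sum_{j}p_{j,3}\le \eps/t$ by combining the per-partition social-welfare bounds of Lemma~\ref{lem:SW_set_element} with the set/element-agent hypothesis, and then obtain the $\max$ and $\min$ bounds by feeding Lemma~\ref{lem:uniform} into Lemma~\ref{lem:PO} with the same arithmetic estimates $(1-\eps)(1-2\eps)\ge 1-3\eps$ and $\eps(1+\eps)\le 2\eps$. The only difference is the order in which the three inequalities are established, which is immaterial.
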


\begin{proof}
Let $\q$ be an  ex-ante envy-free and Pareto-optimal lottery in instance $\In$ with the stated social welfare guarantees. Let $j \in [t]$. By Pareto optimality and the required bound on the social welfare of the base agents, Lemmas~\ref{lem:uniform} and  \ref{lem:PO} yield $\min\{p_{j,1},p_{j,2}\} \leq \eps \cdot (p_{j,1}+p_{j,2}) \leq \eps \cdot \frac{1+\eps}{t} \leq \frac{2 \eps}{t}$. Next we prove the bounds on $p_{j,3}$ and $\max \{p_{j,1}, p_{j,2}\}$.

By the bounds on the expected utility of the set and element agents in partitions $P_{j,1}$, $P_{j,2}$, and $P_{j,3}$ from Lemma~\ref{lem:SW_set_element}, the probabilities of allocations in these partitions in the support of lottery $\q$, and 
    the bound on the social welfare of these agents in the statement of the lemma, we have 
   \[
   Q+6+\frac{r}{t} \leq (Q+9) \cdot \left(1- \sum_{j \in [t]}p_{j,3}\right)+\left(\frac{Q}{2}+9\right) \cdot \sum_{j \in [t]} p_{j,3}
   \]
   and, thus,
   \[
   \sum_{j \in [t]} p_{j,3} \leq \frac{2}{Q} \cdot \left(3-\frac{r}{t}\right) \leq \frac{\eps}{t} \enspace .
   \]
   Trivially, this implies the desired bound $p_{j,3} \leq \frac{\eps}{t}$ for $j \in [t]$. The last inequality follows by the definition of $Q$ (recall that $Q=\frac{6t}{\eps}$). Now, by Lemma~\ref{lem:uniform}, we get $p_{j,1}+p_{j,2} \geq \frac{1-2\eps}{t}$ and, by Lemma~\ref{lem:PO}, we have $\max\{p_{j,1},p_{j,2}\} \geq (1-\eps) \cdot \frac{1-2 \eps}{t} \geq \frac{1-3\eps}{t}$ for $j \in [t]$.
\end{proof}

Our arguments in the next two lemmas use a particular type of non-canonical allocations.

\begin{definition}
An allocation in partition $P_{j,c}$ for $j\in [t]$ and $c\in [3]$ is called {\em defective} if there is $i\in [r]$ such that $e_i\in S_j$ and agent $z_i$ is not assigned bundle $Z_i$.
\end{definition}
In the proofs of the next two lemmas, for a given lottery, we will denote by $\gamma_{j,c}$ the probability mass put on defective allocations in partition $P_{j,c}$ for $j\in [t]$ and $c\in [3]$. We denote by $\gamma$ the total probability mass put on defective allocations, i.e., $\gamma=\sum_{j\in [t]}{\left(\gamma_{j,1}+\gamma_{j,2}+\gamma_{j,3}\right)}$.

Our next technical lemma proves an upper bound on the probability mass put by any Pareto-optimal lottery with high enough social welfare on defective allocations.

\begin{lemma} \label{lem:noncanonical}
    In instance $\In$, any ex-ante envy-free and Pareto-optimal lottery with social welfare at least $R+R/t+r/t-3$ for the base agents and at least $Q+6+r/t$ for the set and element agents, must put a probability mass of at most $5 \eps$ on defective allocations.
\end{lemma}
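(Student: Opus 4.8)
\textbf{Proof proposal for Lemma~\ref{lem:noncanonical}.}

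The plan is to bound the social welfare lost due to defective allocations and show that if this loss is too large, the stated social welfare guarantees cannot both hold. First I would fix an ex-ante envy-free and Pareto-optimal lottery $\q$ with the stated guarantees and recall the quantities $\gamma_{j,c}$ and $\gamma$ measuring the probability mass on defective allocations. The key observation is that in a \emph{defective} allocation of partition $P_{j,c}$, some element agent $z_i$ with $e_i\in S_j$ does not receive bundle $Z_i$; by Pareto-optimality, whoever receives $Z_i$ must have positive utility for it, but the only agents with positive utility for $Z_i$ in any partition $P_{j',c'}$ with $e_i \in S_{j'}$ are the $z_i$ agents themselves (utility $1/f_i$), so $Z_i$ must go to some $z_{i}$ agent---but then \emph{that} agent's own bundle $Z_{i}$ (wait, it is the same bundle) --- more carefully, the point is that reallocating $Z_i$ to $z_i$ while swapping back would be a Pareto-improvement unless the displaced agent also gains; I would instead argue directly that the total expected utility extracted by the $z$-agents, plus the $v,w$-agents, is strictly reduced on the defective mass compared to the canonical allocation, quantifying the deficit as $\Omega(\gamma/t)$ in the worst case (each defective allocation in $P_{j,c}$ costs at least roughly $1/f_i \geq 1/(3t)$ in the welfare of the set-and-element agents relative to the canonical arrangement, since a $z_i$ bundle $Z_i$ of value $1/f_i$ is misassigned to an agent who values it at most as much or an agent with zero value for it).

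Next I would combine this per-partition deficit with the upper bounds from Lemma~\ref{lem:SW_set_element} on the conditional expected utility the set and element agents can extract from each partition ($Q+9$ for $P_{j,1},P_{j,2}$ and $Q/2+9$ for $P_{j,3}$), refined to account for defectiveness: the conditional welfare from the defective part of $P_{j,c}$ is smaller than the generic bound by the deficit identified above. Summing over $j$ and $c$ and using $\sum_{j,c}p_{j,c}=1$, I get an upper bound on the expected social welfare of the set and element agents of the form $(Q+9) - (\text{const})\cdot\gamma \cdot (\text{something like } 1/t \text{ or } 1)$, minus also the contribution lost because some mass sits on $P_{j,3}$ (which by Lemma~\ref{lem:binary_switch} is at most $\eps/t$ per $j$, hence at most $\eps$ total, contributing a loss of order $\eps Q$). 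Setting this at least the hypothesized $Q+6+r/t$ and rearranging yields $\gamma \leq c\eps$ for the appropriate constant; tuning the constants ($R,Q,\eps$ were chosen precisely to make these slacks work out) gives $\gamma \leq 5\eps$.

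The main obstacle I anticipate is getting the per-defective-allocation welfare deficit \emph{clean and tight enough}: I must be careful that a defective allocation does not also rearrange the high-value bundles ($H_{j,\ell}$ of value $Q$, or $B_j$ of value $R$) in a way that compensates---but ex-ante envy-freeness pins down the $H$-bundle assignments (as in Lemma~\ref{lem:SW_set_element}) and Lemma~\ref{lem:base} pins down the $B$-bundle assignments up to the $b_0$/$b_j$ swap, so these cannot be used to recover value, and the deficit is genuinely localized to the $V_i,W_i,Z_i$ bundles for $e_i\in S_j$. A secondary subtlety is bookkeeping the element-agent welfare correctly: the $z_i$ agents also have positive utility for $V_i$ and $W_i$ bundles (values $2/3$ and $(1+1/f_i)/f_i$), so ``defective'' (meaning $z_i$ misses $Z_i$) must be shown to cost welfare even when $z_i$ instead grabs a $V_i$ or $W_i$ bundle; here I would use that $V_i$ and $W_i$ are worth $2$ to the $v_i$ and $w_i$ agents respectively, strictly more than to $z_i$, so any such reshuffling is welfare-losing by a definite amount per unit of defective mass. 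Once these local accountings are in hand, the global inequality and the arithmetic with $R=6t^3/\eps$, $Q=6t/\eps$, $\eps=1/(12t^2)$ is routine.
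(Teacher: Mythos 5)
Your overall strategy (accounting for the welfare deficit per unit of defective probability mass and bounding $\gamma$ against the stated social-welfare floor) is the same as the paper's. However, your quantitative estimate of the deficit is off by a factor of $t$, and this gap is fatal for obtaining $\gamma\leq 5\eps$.

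You estimate the loss per unit defective mass as ``roughly $1/f_i\geq 1/(3t)$'', reasoning that a wasted $Z_i$ bundle costs $1/f_i$. That would only give $\gamma\leq O(t\eps)$, which is far too weak for the later lemmas (they need $\gamma\leq 5\eps$ so that, e.g., the bound $u_{z_{i^*}}(\q;z_{i^*}) < 1/t+1/(2t^2)$ goes through). The crucial structural fact, which you gesture at in the ``secondary subtlety'' but do not carry through, is that Pareto-optimality forbids the $Z_i$-simply-wasted scenario: if $z_i$ misses both $Z_i$ and $V_i$, swapping $z_i$ with the holder of $Z_i$ is a Pareto improvement, and if $z_i$ has $Z_i$ but $v_i$ misses $V_i$, swapping $v_i$ with the holder of $V_i$ is a Pareto improvement. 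Thus in any allocation in the support, for each $e_i\in S_j$, either ($z_i$ gets $Z_i$ and $v_i$ gets $V_i$) or ($z_i$ gets $V_i$). In the defective case, $z_i$'s gain over the canonical outcome is at most a constant less than 1 while $v_i$ (or $w_i$) loses a full $2$, so the element-agent welfare deficit is at least $1$ per unit of defective mass, a \emph{constant}, not $\Theta(1/t)$. Only with this constant deficit does the arithmetic with $Q=6t/\eps$ and $\eps=1/(12t^2)$ yield $\gamma\leq 5\eps$.

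A second, smaller issue: your first-paragraph sidebar about who holds $Z_i$ (``$Z_i$ must go to some $z_i$ agent\ldots wait, it is the same bundle'') is confused and would need to be replaced by the explicit two-case Pareto-swap argument above. The rest of your plan --- combining the deficit with the upper bounds of Lemma~\ref{lem:SW_set_element}, using $\sum_j p_{j,3}\leq\eps$ from Lemma~\ref{lem:binary_switch} and $p_{j,1}+p_{j,2}+p_{j,3}\leq(1+\eps)/t$ from Lemma~\ref{lem:uniform}, and observing that the set agents contribute at most $Q$ so the element agents must contribute at least $6+r/t$ --- matches the paper and would close the proof once the deficit is corrected to a constant.
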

\begin{proof}
Let $\q$ be an ex-ante envy-free and Pareto-optimal lottery in instance $\In$ with the stated social welfare guarantees. Let $j\in [t]$ and consider an allocation in partition $P_{j,1}$ in the support of lottery $\q$. We claim that for every $i\in [r]$ such that $e_i\in S_j$, this allocation either (1) assigns bundle $Z_i$ to the element agent $z_i$ and the bundle $V_i$ to the element agent $v_i$ or (2) assigns bundle $V_i$ to the element agent $z_i$. Indeed, for the sake of contradiction, assume that for some $i^*\in [r]$ such that $e_{i^*}\in S_j$, the element agent $z_{i^*}$ is assigned neither bundle $Z_{i^*}$ nor bundle $V_{i^*}$. Then, the lottery $\widetilde\q$, which moves probability mass from this allocation to the allocation in which the agent who gets bundle $Z_{i^*}$ and agent $z_{i^*}$ have their bundles swapped, Pareto-dominates lottery $\q$ (notice that bundle $Z_{i^*}$ gives zero utility to any other agent besides $z_{i^*}$, and agent $z_{i^*}$ gets zero utility from any bundle different than $Z_{i^*}$ and $V_{i^*}$), contradicting its Pareto-optimality. Now, again for the sake of contradiction, assume that agent $z_{i^*}$ is assigned bundle $Z_{i^*}$ but agent $v_{i^*}$ does not get bundle $V_{i^*}$. Then, the lottery $\widetilde\q$, which moves probability mass from this allocation to the allocation in which the agent who gets bundle $V_{i^*}$ and agent $v_{i^*}$ have their bundles swapped, Pareto-dominates lottery $\q$ (notice that bundle $V_{i^*}$ gives zero utility to any other agent besides agents $v_{i^*}$ and $z_{i^*}$, and agent $v_{i^*}$ gets zero utility from any bundle different than $V_{i^*}$), contradicting its Pareto-optimality. 

Thus, a non-defective allocation in partition $P_{j,1}$ in the support of lottery $\q$ gives utility $\sum_{i\in [r]:e_i\in S_j}{(2+1/f_i)}$ to the element agents. In a defective allocation in partition $P_{j,1}$ in the support of lottery $q$, the element agent $z_{i^*}$ has utility at most $(1+1/f_{i^*})/f_{i^*}$ instead of $1/f_{i^*}$ and the element agent $v_{i^*}$ has utility $0$ instead of $2$, for some $i^*\in [r]$ such that $e_{i^*}\in S_j$. Thus, a defective allocation in partition $P_{j,1}$ in the support of lottery $\q$ gives utility at most $\sum_{i\in [r]:e_i\in S_j}{(2+1/f_i)}-1$ to the element agents. 

Following analogous reasoning to the two paragraphs above, we can show that a non-defective allocation in partition $P_{j,2}$ (respectively, $P_{j,3}$) in the support of lottery $\q$ gives utility $\sum_{i\in [r]:e_i\in S_j}{(2+1/f_i)}$ (respectively, $\sum_{i\in [r]:e_i\in S_j}{(4+1/f_i)}$) to the element agents, and a defective allocation in partition $P_{j,2}$ (respectively, $P_{j,3}$) in the support of lottery $\q$ gives utility $\sum_{i\in [r]:e_i\in S_j}{(2+1/f_i)}-1$ (respectively, $\sum_{i\in [r]:e_i\in S_j}{(4+1/f_i)}-1$) to the element agents. 

We are now ready to upper-bound the social welfare of the element agents in lottery $\q$ by
\begin{align}\nonumber
    &\sum_{j \in [t]}{\left((p_{j,1}-\gamma_{j,1})\cdot \sum_{i\in [r]: e_i \in S_j}{\left(2+\frac{1}{f_i}\right)}+(p_{j,2}-\gamma_{j,2})\cdot \sum_{i\in [r]: e_i \in S_j}{\left(2+\frac{1}{f_i}\right)}\right.}\\\nonumber
    &\quad\quad {+(p_{j,3}-\gamma_{j,3})\cdot \sum_{i\in [r]: e_i \in S_j}{\left(4+\frac{1}{f_i}\right)}+\gamma_{j,1}\cdot \left(\sum_{i\in [r]: e_i \in S_j}{\left(2+\frac{1}{f_i}\right)}-1\right)}\\\nonumber
    &\quad\quad {\left.+\gamma_{j,2}\cdot \left(\sum_{i\in [r]: e_i \in S_j}{\left(2+\frac{1}{f_i}\right)}-1\right)+\gamma_{j,3}\cdot \left(\sum_{i\in [r]: e_i \in S_j}{\left(4+\frac{1}{f_i}\right)}-1\right)\right)}\\\nonumber
    & = \sum_{j \in [t]}{(6p_{j,1}+6p_{j,2}+12p_{j,3})}+ \sum_{j\in [t]}{\sum_{i\in [r]: e_i \in S_j}{ (p_{j,1}+p_{j,2}+p_{j,3})\cdot \frac{1}{f_i}}}-3\cdot \sum_{j \in [t]}{(\gamma_{j,1}+ \gamma_{j,2}+\gamma_{j,3})}\\\label{eq:sw-from-element-agents}
    & \leq 6+12\eps + \frac{1+\eps}{t}\cdot \sum_{i \in [r]}{\sum_{j\in [t]: e_i \in S_j}{\frac{1}{f_i}}} -3\gamma  = 6 + 12 \eps + (1+\eps) \cdot \frac{r}{t} - 3\gamma\leq 6+\frac{r}{t}+15\eps-3\gamma.
\end{align}
The first inequality follows using the inequality $p_{j,1}+p_{j,2}+p_{j,3}\leq \frac{1+\eps}{t}$ from Lemma~\ref{lem:uniform} and $p_{j,e}\leq \frac{\eps}{t}$ from Lemma~\ref{lem:binary_switch}. The second equality follows by the definition of $f_i$, and the last inequality follows by our assumption that $r\leq 3t$.

Now, recall (by our construction) that the social welfare of the set agents is no more than $Q$ and, hence, the lottery $\q$ has a social welfare of at least $6+r/t$ from the element agents. Using this observation and the upper bound on the social welfare of the element agents in  (\ref{eq:sw-from-element-agents}), we obtain that $\gamma \leq  5 \eps$, as desired.
\end{proof}

We are now ready to prove the soundness and completeness of our reduction. This is done in Lemmas~\ref{lem:soundness} and \ref{lem:completeness}, respectively, which complete the proof of Theorem~\ref{thm:NPhard}.
\begin{lemma} \label{lem:soundness}
    If instance $\In$ admits an ex-ante envy-free and Pareto-optimal lottery of social welfare at least $R+R/t+Q+6+r/t$, then instance $\phi$ has an exact cover.
\end{lemma}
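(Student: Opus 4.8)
The plan is to start from an ex-ante envy-free and Pareto-optimal lottery $\q$ of social welfare at least $R+R/t+Q+6+r/t$, and extract an exact cover from the ``almost combinatorial'' structure guaranteed by the earlier lemmas. First I would observe that, since the base agents contribute at most $R+R/t+r/t-3$ to social welfare only in the best case — wait, more carefully: the total social welfare splits as (base agents) + (set agents) + (element agents). The set agents contribute at most $Q$ (agents $h_{j,1},h_{j,2}$ can get at most $Q$ each but only one partition $P_{j,c}$ is ``active'' per $j$, and the $h_{j,\ell}$ for $\ell\geq 3$ contribute negligibly), and by Lemma~\ref{lem:SW_set_element}-type bounds the base agents contribute at most $R+R/t+r/t$ (roughly). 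So to reach the target $R+R/t+Q+6+r/t$, all three groups must be essentially maxed out: in particular the base agents get at least $R+R/t+r/t-3$ (actually I should argue this is forced — if base agents fell short by more than a small amount, the other groups cannot compensate since they are capped), and the set and element agents get at least $Q+6+r/t$. This puts us exactly in the hypotheses of Lemmas~\ref{lem:uniform}, \ref{lem:binary_switch}, and \ref{lem:noncanonical}.

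Next I would use Lemma~\ref{lem:binary_switch}: for each $j\in[t]$ exactly one of $P_{j,1},P_{j,2}$ carries probability mass $\geq\frac{1-3\eps}{t}$ and the other carries $\leq\frac{2\eps}{t}$, with $p_{j,3}\leq\frac{\eps}{t}$. Define $\mathcal{S}'=\{S_j : p_{j,2}\geq\frac{1-3\eps}{t}\}$, i.e., the ``selected'' triples are those whose index $j$ has partition $P_{j,2}$ dominant. I claim $\mathcal{S}'$ is an exact cover. The key point is the element agents $w_i$ and $z_i$ and the structure of partition $P_{j,2}$ versus $P_{j,1}$: in $P_{j,1}$ the relevant element utility comes from $v_i$ getting $V_i$ (utility $2$) plus $z_i$ getting $Z_i$ (utility $1/f_i$), whereas in $P_{j,2}$ it comes from $w_i$ getting $W_i$ (utility $2$) plus $z_i$ getting $Z_i$. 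I would need to pin down from the social-welfare accounting of the element agents — using the bound \eqref{eq:sw-from-element-agents} and the fact that the defective mass $\gamma\leq 5\eps$ from Lemma~\ref{lem:noncanonical} — that the $z_i$ agents must be covered ``exactly once'': each $z_i$ must receive bundle $Z_i$ with probability close to $1$, which happens only if the partitions $P_{j,c}$ active for $j$ with $e_i\in S_j$ sum to probability $\approx 1/t\cdot(\text{number of selected }S_j\text{ containing }e_i)$, and since total active mass per element is forced to be exactly enough to give $z_i$ its full utility, each element is in exactly one selected triple. More precisely, I would set up the equation that the element-agent social welfare is at least $6+r/t$ (since set agents cap at $Q$), match it against the upper bound in \eqref{eq:sw-from-element-agents}, and deduce that the selected triples partition $\mathcal{E}$; counting then forces $|\mathcal{S}'|=r/3$.

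The main obstacle I anticipate is the bookkeeping in the element-agent social-welfare argument: one has to carefully distinguish why $P_{j,2}$ being the dominant partition (rather than $P_{j,1}$) corresponds to ``selecting'' $S_j$ in a way that makes the $z_i$ utilities add up correctly — this must exploit the slightly different coefficients $(1/f_i$ vs.\ $(1+1/f_i)/f_i$ vs.\ $2/3)$ that agent $z_i$ has for $Z_i$, $W_i$, and $V_i$ respectively across the three partitions, together with envy-freeness constraints on $z_i$ (so that $z_i$ does not envy $v_{i'}$ or $w_{i'}$). The delicate part is ruling out ``fractional'' covers: a priori $\q$ could spread an element $e_i$ across two selected triples each with half-weight. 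I expect this is blocked by combining the ex-ante envy-freeness of agent $z_i$ with Pareto-optimality (which via Lemma~\ref{lem:noncanonical} forces non-defective allocations almost surely, so $z_i$ genuinely gets $Z_i$ whenever $e_i\in S_j$ and $P_{j,c}$ is active), and then a clean counting argument: the total active probability is $\sum_j p_j\approx 1$, it is distributed as $\approx 1/t$ per selected triple, each selected triple covers $3$ elements, each element is covered by the selected triples a number of times equal to (active mass on triples containing it)$\times t$, this must be a nonnegative integer summing (over elements) to $3\cdot|\mathcal{S}'|$ and (over the constraint that each $z_i$ is fully served) forcing each such count to be exactly $1$. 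I would carry out the $\eps$-error tracking at the end to confirm all these ``$\approx$'' become exact because $\eps=\frac{1}{12t^2}$ is small enough that an integer within $O(\eps t)<1$ of a target integer must equal it.
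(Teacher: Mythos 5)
Your high-level framework is right — decompose the social welfare across agent groups, conclude that the hypotheses of Lemmas~\ref{lem:uniform}, \ref{lem:binary_switch}, and \ref{lem:noncanonical} are all satisfied, and then read off a candidate cover from the almost-binary structure of $\{p_{j,1},p_{j,2}\}$. But the core of your argument, the part that actually certifies the cover, does not go through as proposed.

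First, the selection convention is flipped, and the gadget is not symmetric. You define $\mathcal{S}'=\{S_j : p_{j,2}\geq\frac{1-3\eps}{t}\}$, but the paper's cover is $C=\{j : p_{j,1}\geq\frac{1-3\eps}{t}\}$, and this matters. The reason is the choice of $z_i$'s coefficients for $V_i$ and $W_i$: in $P_{j,1}$ agent $z_i$ values $V_i$ at $2/3$, while in $P_{j,2}$ she values $W_i$ at $(1+1/f_i)/f_i$. These are tuned precisely so that, with $z_i$'s own expected utility being $\approx 1/t$, (i) if $e_i$ is covered by \emph{two or more} indices in $C$ then $z_i$ envies $v_i$ (two masses $\approx 1/t$ times $2/3$ gives $\approx 4/(3t)>1/t$), and (ii) if $e_i$ is covered by \emph{none} then all $f_i$ relevant $j$'s have $P_{j,2}$ dominant and $z_i$ envies $w_i$ (mass $\approx f_i/t$ times $(1+1/f_i)/f_i = (1+1/f_i)/t>1/t$). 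Try to run the analogous two cases with your convention and they fail: if $e_i$ is covered zero times by $\mathcal{S}'$ and $f_i=1$, then $z_i$'s utility for $v_i$'s bundle is only $\approx 2/(3t)<1/t$, no envy; if $e_i$ is covered twice by $\mathcal{S}'$ and $f_i=3$, then $z_i$'s utility for $w_i$'s bundle is $\approx 2\cdot(4/3)/(3t)=8/(9t)<1/t$, again no envy. So the asymmetry between $2/3$ and $(1+1/f_i)/f_i$ forces the paper's convention.

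Second, and more fundamentally, the mechanism you lean on — social-welfare accounting for the element agents combined with a counting argument — cannot distinguish covers from non-covers. Inspect \eqref{eq:sw-from-element-agents}: the element-agent welfare is $\approx 6+r/t$ \emph{no matter} how the total probability mass $\sum_j p_j = 1$ is split between $P_{j,1}$ and $P_{j,2}$ for each $j$, since the contribution per $j$ is $\approx 6p_{j,1}+6p_{j,2}+\sum_{i:e_i\in S_j}(p_{j,1}+p_{j,2})/f_i$ symmetrically. The welfare bound is therefore blind to the cover structure; it is only used (in Lemma~\ref{lem:noncanonical}) to bound the defective mass $\gamma$, not to single out a cover. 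Likewise there is no a priori ``integrality'' to exploit: the quantities you describe as cover counts are real numbers obtained from probability masses, and nothing forces them to be integers close to $0$ or $1$ except the envy-freeness of $z_i$ towards $v_i$ and $w_i$. You gesture at envy-freeness near the end, but the actual argument has to be built on it entirely: establish the sharp upper bound $u_{z_i}(\q;z_i)<\frac{1}{t}+\frac{1}{2t^2}$ (using Lemmas~\ref{lem:uniform} and \ref{lem:noncanonical}), then show the two lower bounds $u_{z_i}(\q;w_i)>\frac{1}{t}+\frac{1}{2t^2}$ in the ``covered zero times'' case and $u_{z_i}(\q;v_i)>\frac{1}{t}+\frac{1}{2t^2}$ in the ``covered at least twice'' case. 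As written your proposal is missing this, which is the whole content of the soundness argument.
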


\begin{proof}
Let $\q$ be an ex-ante envy-free and Pareto-optimal lottery in instance $\In$ with the stated social welfare guarantee. By Lemma~\ref{lem:SW_set_element}, the expected utility set and element agents have is at most $Q+9$. Hence, the social welfare of the base agents is at least $R+R/t+r/t-3$ and the conditions of Lemma~\ref{lem:uniform} are satisfied. Now, observe that for $i\in [r]$, the utility of agent $z_i$ is at most $1/f_i$ in any non-defective allocation in partitions $P_{j,c}$ for $j\in [t]$ such that $e_i\in S_j$ and $c\in [3]$, while it is at most $\max\left\{1/f_i,2/3,(1+1/f_i)/f_i\right\}\leq 1+1/f_i$ in any defective allocation in partition $P_{j,c}$ for $j\in [t]$ such that $e_i\in S_j$ and $c\in [3]$. Clearly, the utility of agent $z_i$ is zero in any allocation in partition $P_{j,c}$ for $j\in [t]$ such that $e_i\not\in S_j$. Thus, the expected utility of agent $z_i$ for $i\in [r]$ is 
   \begin{align}\nonumber
       u_{z_i}(\q;z_i) & \leq  \sum_{j \in [t]:e_i \in S_j}\left((p_{j,1}+p_{j,2}+p_{j,3}-\gamma_{j,1}-\gamma_{j,2}-\gamma_{j,3}) \cdot \frac{1}{f_i} + (\gamma_{j,1}+\gamma_{j,2}+\gamma_{j,3}) \cdot \left(1+\frac{1}{f_i}\right)\right) \\\nonumber
       &= \sum_{j\in [t]:e_i\in S_j}{(p_{j,1}+p_{j,2}+p_{j,3})\cdot \frac{1}{f_i}}+\sum_{j\in [t]:e_i\in S_j}{(\gamma_{j,1}+\gamma_{j,2}+\gamma_{j,3})}\\\label{eq:z_i-exp-utility-ub}
       &\leq \frac{1+\eps}{t}  \sum_{j \in [t]:e_{i} \in S_j}{\frac{1}{f_i}} +  \gamma \leq \frac{1}{t} + \frac{\eps}{t}+5\eps < \frac{1}{t}+\frac{1}{2t^2}.
   \end{align}
The second inequality follows by Lemma~\ref{lem:uniform} which asserts that $p_{j,1}+p_{j,2}+p_{j,3}\leq \frac{1+\eps}{t}$, the third one by the definition of $f_i$ and Lemma~\ref{lem:noncanonical}, and the last one by the definition of $\eps$ (recall that $\eps=\frac{1}{12t^2}$) and since $t\geq 9$.

On the other hand, notice that the bundle $B_0$ gives utility $R/t$ only to the base agent $b_0$, while for $j\in [t]$ and $c\in [3]$, the only bundle among $B_1, B_2, ..., B_t$ that gives non-zero utility (equal to $R$) to some base agent is bundle $B_j$. Thus, the social welfare of the base agents is at most $R+R/t$ and, hence, the social welfare of the set and element agents in lottery $\q$ is at least $Q+6+r/t$. Together with the properties of ex-ante envy-freeness and Pareto-optimality and the lower bound on the social welfare of the base agents claimed above, the conditions of Lemma~\ref{lem:binary_switch} are satisfied, meaning that the lottery $\q$ has the combinatorial structure indicated by it. 

Define $C=\{j\in [t]: p_{j,1} \geq \frac{1-3\eps}{t}\}$. We will show that $C$ forms an exact cover of $\phi$. For the sake of contradiction, assume otherwise that there exists an element $e_{i^*}$ for some $i^* \in [r]$ that is included in either none or in at least two sets $S_j$ such that $j \in C$. We distinguish between two cases:
    
\paragraph{Case 1.} If $e_{i^*}$ is not included in any set $S_j$ such that $j \in C$, then $p_{j,1} < \frac{1-3\eps}{t}$ and, by Lemma~\ref{lem:binary_switch}, $p_{j,2} \geq \frac{1-3\eps}{t}$ for all $j \in [t]$ such that $e_{i^*} \in S_j$. Now, notice that agent $w_{i^*}$ is assigned bundle $W_{i^*}$ (for which agent $z_{i^*}$ has utility $(1+1/f_{i^*})/f_{i^*}$) in every non-defective allocation in partition $P_{j,2}$ for $j\in [t]$ such that $e_{i^*}\in S_j$. Thus, the expected utility agent $z_{i^*}$ has for the bundle assigned to agent $w_{i^*}$ is 
\begin{align}\nonumber
         u_{z_{i^*}}(\q;w_{i^*}) & \geq \sum_{j \in [t]:e_{i^*} \in S_j}{(p_{j,2} - \gamma_{j,2})\cdot \left(1+\frac{1}{f_{i^*}}\right) \cdot \frac{1}{f_{i^*}}} \geq \left(1+\frac{1}{t}\right)  \cdot \sum_{j \in [t]:e_{i^*} \in S_j}{(p_{j,2} - \gamma_{j,2}) \cdot \frac{1}{f_{i^*}}} \\\nonumber  
         &\geq  \left(1+\frac{1}{t}\right)  \cdot \frac{1-3\eps}{t} \cdot\sum_{j \in [t]:e_{i^*} \in S_j}{ \frac{1}{f_{i^*}}} - \left(1+\frac{1}{t}\right) \cdot \sum_{j \in [t]:e_{i^*}\in S_j} \frac{\gamma_{j,2}}{f_{i^*}}\\\label{eq:z_i-exp-utility-lb}
         & \geq \left(1+\frac{1}{t}\right) \cdot \frac{1-3\eps}{t}-\left(1+\frac{1}{t}\right) \cdot \gamma \geq \frac{1}{t}+\frac{1}{t^2} -\frac{3\eps}{t^2}-\frac{8\eps}{t}-5\eps > \frac{1}{t}+\frac{1}{2t^2}.
    \end{align}
The second inequality follows since $f_{i^*}\leq t$ by definition, the third one since $p_{j,2}\geq \frac{1-3\eps}
{t}$, the fourth one by the definitions of $\gamma$ and $f_{i^*}$, the fifth one by Lemma~\ref{lem:noncanonical}, and the last one by the definition of $\eps$ (recall that $\eps=\frac{1}{12t^2}$ and since $t\geq 9$. By inequalities (\ref{eq:z_i-exp-utility-ub}) and (\ref{eq:z_i-exp-utility-lb}), we obtain that $u_{z_{i^*}}(\q;z_{i^*})<u_{z_{i^*}}(\q;w_{i^*})$, meaning that agent $z_{i^*}$ is envious of agent $w_{i^*}$, a contradiction.

\paragraph{Case 2.} Let $D=\{j\in C: e_{i^*}\in S_j\}$ and assume that $|D|\geq 2$. Since $D\subseteq C$, we have $p_{j,1}\geq \frac{1-3\eps}{t}$ for every $j \in D$. Notice that agent $v_{i^*}$ is assigned bundle $V_{i^*}$ (for which agent $z_{i^*}$ has utility $2/3$) in every non-defective allocation in partition $P_{j,1}$ for $j\in [t]$ such that $e_{i^*}\in S_j$. Thus, the expected utility agent $z_{i^*}$ has for the bundle assigned to agent $v_{i^*}$ is
   \begin{align}\label{eq:z_i-exp-utility-lb-2}
         u_{z_{i^*}}(\q;v_{i^*}) \geq  \sum_{j\in D}  \frac{2}{3}  \cdot (p_{j,1} - \gamma_{j,1}) \geq \frac{4}{3} \cdot \frac{1-3\eps}{t} - \frac{2}{3} \cdot  \gamma 
         \geq \frac{4}{3t}-\frac{4\eps}{t}-\frac{10}{3}\eps> \frac{1}{t} +\frac{1}{2t^2}.
    \end{align}
The third inequality follows by Lemma~\ref{lem:noncanonical} and the last one by the definition of $\eps$ (recall that $\eps=\frac{1}{12t^2}$) and since $t\geq 9$. By inequalities (\ref{eq:z_i-exp-utility-ub}) and (\ref{eq:z_i-exp-utility-lb-2}), we obtain that $u_{z_{i^*}}(\q;z_{i^*})<u_{z_{i^*}}(\q;w_{i^*})$, again meaning that agent $z_{i^*}$ is envious of agent $w_{i^*}$, a contradiction.
\end{proof}

\begin{lemma} \label{lem:completeness}
    If instance $\phi$ has an exact cover, then instance $\In$ admits an ex-ante envy-free and Pareto-optimal lottery of social welfare at least $R+R/t+Q+6+r/t$.
\end{lemma}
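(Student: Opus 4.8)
The plan is to convert an exact cover directly into a lottery, verify the social welfare and ex-ante envy-freeness essentially by inspection, and then certify Pareto-optimality via the weighted sum method.

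\textbf{Construction and social welfare.} Given an exact cover, encode it as $C\subseteq[t]$ with $|C|=r/3$ such that $\{S_j\}_{j\in C}$ partitions $\mathcal E$. Set $c(j)=1$ if $j\in C$ and $c(j)=2$ otherwise, and let $\q$ be the lottery placing probability $1/t$ on the \emph{canonical} allocation of $P_{j,c(j)}$ for each $j\in[t]$. Since the canonical allocation of every partition gives $B_k$ to $b_k$, $H_{j,\ell}$ to $h_{j,\ell}$, and $V_i,W_i,Z_i$ to $v_i,w_i,z_i$, every expected utility in $\q$ is a short sum read off the utility table. The base agents collect $R/t+t\cdot(R/t)=R+R/t$ (agent $b_0$ always holds $B_0$ worth $R/t$, and $b_j$ holds $B_j$ but values it only in $P_{j,c(j)}$); the selected set agents $h_{j,c(j)}$ collect $t\cdot(Q/t)=Q$; and the element agents collect $\tfrac1t\sum_j\sum_{i:e_i\in S_j}(2+1/f_i)=\tfrac1t\sum_i(2f_i+1)=6+r/t$, using $\sum_i f_i=3t$. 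Hence $\mathrm{SW}(\q)=R+R/t+Q+6+r/t$ exactly, matching the bound in the lemma.

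\textbf{Ex-ante envy-freeness.} For every agent except the $z_i$'s this is immediate: $b_0$ is indifferent between its own bundle $B_0$ (worth $R/t$) and each $B_j$ (worth $R\cdot\tfrac1t$ in expectation); each $b_j$, each $v_i$, and each $w_i$ has positive value only for its own bundle and value $0$ for everyone else's; and each non-selected set agent together with each ``decoy'' agent $h_{j,\ell}$ ($\ell\ge3$) has positive value only in partitions \emph{outside} the support of $\q$, so its utility and every comparison equals $0$. The only real check is agent $z_i$, which gets $1/t$ from its bundle $Z_i$; here I would use that $C$ is an \emph{exact} cover, so $z_i$'s value for $v_i$'s bundle is incurred only in the unique $P_{j^\ast,1}$ with $e_i\in S_{j^\ast}$, $j^\ast\in C$, and equals $\tfrac23\cdot\tfrac1t<\tfrac1t$, while $z_i$'s value for $w_i$'s bundle is incurred in the $f_i-1$ partitions $P_{j,2}$ with $e_i\in S_j$, $j\notin C$, and equals $(f_i-1)\cdot\tfrac1t\cdot\tfrac{1+1/f_i}{f_i}=\tfrac{f_i^2-1}{f_i^2}\cdot\tfrac1t<\tfrac1t$.

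\textbf{Pareto-optimality.} By the weighted sum method it suffices to exhibit weights $w_i>0$ for which $\q$ maximizes $\sum_i w_i u_i(\q;i)$ over all lotteries. Give weight $\beta>0$ to every base agent, $\delta>0$ to every element agent, a tiny positive weight to every non-selected set agent, and to the selected set agent $h_{j,c(j)}$ the weight $w_{h_{j,c(j)}}=\big(T-\beta R(1+\tfrac1t)-\Sigma_j\big)/Q$, where $\Sigma_j=\sum_{i:e_i\in S_j}(2\delta+\delta/f_i)$ and $T>\beta R(1+\tfrac1t)+\max_j\Sigma_j$ is a constant fixed large at the end. Writing $V_k$ for the maximum weighted value of an allocation in a partition $P_{j,c}$, and using that a linear function on doubly stochastic matrices is maximized at a permutation, we get $\sum_i w_i u_i(\q';i)\le\sum_k p'_kV_k$ for every lottery $\q'$, so it is enough that: (i) the canonical allocation is weighted-optimal inside each used partition $P_{j,c(j)}$; (ii) these optimal values are all equal to $T$; and (iii) $V_k\le T$ for every remaining partition. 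A blockwise analysis — the $B$-bundles are valued only by base agents, the $H$-bundles only by set agents, each triple $V_i,W_i,Z_i$ only by $v_i,w_i,z_i$ — yields (i)--(ii) directly from the weight choice (the canonical value of $P_{j,c(j)}$ is $\beta R(1+\tfrac1t)+w_{h_{j,c(j)}}Q+\Sigma_j=T$).

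The routine parts are the construction and the envy-freeness check; the crux is establishing (iii). For $P_{j,2}$ with $j\in C$ and $P_{j,1}$ with $j\notin C$ this reduces to a ``tiny $\le$ large'' comparison of two set-agent weights, but for $P_{j,3}$ one must compute $T$ minus the best weighted value of $P_{j,3}$ and show it is positive: it comes out to $w_{h_{j,c(j)}}Q\eps$ minus an $O(\delta)$ term (plus a negligible $\eps$-times-tiny contribution of the decoy agents), which is positive once $T$, hence $w_{h_{j,c(j)}}$, is chosen large relative to the fixed $\delta$ and tiny weights. This is exactly where the parameter $\eps$, the $(1-\eps)$ discount on $h_{j,1},h_{j,2}$ in $P_{j,3}$, and the tiny-weight decoy agents $h_{j,\ell}$ do their work, and it is the step I expect to be the main obstacle to get fully right.
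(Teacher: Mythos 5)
Your construction, social welfare computation, and envy-freeness check are exactly the paper's; the difference lies in the Pareto-optimality verification, where you take a genuinely different route. You certify Pareto-optimality via the weighted-sum method, exhibiting an explicit positive weight vector $(\beta,\delta,\tau,w_{h_{j,c(j)}})$ for which the lottery maximizes weighted welfare, whereas the paper argues directly that the base, set, and $v_i,w_i$-element agents each already collect their respective (per-used-partition) maxima $R+R/t$, $Q$, and $6$, so that raising any $z_i$'s utility necessarily lowers another element agent's. Your certificate is more mechanical --- once the weights are fixed, cases (i)--(iii) reduce to block-by-block comparisons --- and in particular makes explicit why placing mass on any $P_{j,3}$ (which could push set-agent and $v_i,w_i$ collective welfare up to roughly $2Q$ and $12$) is still not a Pareto improvement: the margin is $w_{h_{j,c(j)}}Q\eps - \tau Q(1-\eps) - 6\delta > 0$ once $T$ is taken large with $\tau,\delta$ fixed. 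One small bookkeeping slip: the middle term in that margin comes from the non-selected set agent $h_{j,3-c(j)}$ (weight $\tau$, value $Q(1-\eps)$), not from the decoy agents $h_{j,\ell}$ with $\ell\ge 3$, whose $\eps$-valuations never enter the weighted-optimal allocation of $P_{j,3}$; the decoys are irrelevant in this completeness direction and only earn their keep in the soundness lemmas.
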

\begin{proof}
Let $C$ be an exact cover of instance $\phi$. Construct the lottery for instance $\In$ which has $p_{j,1}=1/t$ and $p_{j,2}=0$ for $j \in C$, $p_{j,1}=0$ and $p_{j,2}=1/t$ for $j \notin C$, and $p_{j,3}=0$ for $j \in [t]$, and uses the canonical allocations only. 
    
We first justify the claimed social welfare of the lottery. Clearly, the base agent $b_0$ has expected utility $R/t$. For $j\in [t]$, the base agent $b_j$ gets utility $R$ from partitions $P_{j,1}$ and $P_{j,2}$ only. Each of them has probability $1/t$. So, the overall expected utility of the base agents $b_1, ..., b_j$ is $R$. For $j\in C$, agent $h_{j,1}$ is the only set agent who gets utility $Q$ from partition $P_{j,1}$. For $j\not\in C$, agent $h_{j,2}$ is the only set agent who gets utility $Q$ from partition $P_{j,2}$. So, the overall expected utility of set agents is $Q$. For $i\in [r]$, the element agent $z_i$ gets utility $1/f_i$ with probability $1/t$ from either partition $P_{j,1}$ or partition $P_{j,2}$ for every $j\in [t]$ such that $e_i\in S_j$. Thus, agent $z_i$ has expected utility $1/t$; so, the $r$ element agents $z_i$ for $i\in [r]$ have overall expected utility $r/t$. For $j\in C$, partition $P_{j,1}$ gives a utility of $6$ to the three element agents $v_i$ corresponding to each element $e_i\in S_j$. Similarly, for $j\not\in C$, partition $P_{j,2}$ gives a utility of $6$ to the three agents $w_i$ corresponding to each element $e_i\in S_j$. So, the total expected utility of element agents $v_i$ and $w_i$ is $6$.

Let us now examine ex-ante envy-freeness. Notice that in any allocation in the support of the lottery, all base agents besides agent $b_0$, all set agents, and all element agents besides agent $z_i$ for $i\in [r]$ are assigned the bundle that gives them maximum utility. So, to justify ex-ante envy-freeness, we just need to examine whether agent $b_0$ envies agent $b_j$ for $j\in [t]$ and whether agent $z_i$ envies agents $v_i$ and $w_i$ for $i\in [r]$. Agent $b_0$ gets utility $R/t$ from bundle $B_0$ in any allocation of the lottery. Her utility for the bundle $B_j$ assigned to agent $b_j$ in partitions $P_{j,1}$ and $P_{j,2}$ is $R$ and is thus non-envious as one of them is part of the lottery with probability $1/t$. For $i\in [r]$, agent $z_i$ gets utility 
$1/f_i$ in each allocation of the lottery corresponding to a partition $P_{j,1}$ or $P_{j,2}$ with $e_i\in S_j$. Notice that there are exactly $f_i$ such partitions appearing in the lottery with probability $1/t$, for an expected utility of $1/t$ for agent $z_i$. Now, notice that agent $v_i$ gets bundle $V_i$ of partition $P_{j,1}$ for the single $j\in C$ such that $e_i\in S_j$. This happens with probability $1/t$, and agent $z_i$ has a value of only $2/3$ for this bundle. For $i\in [r]$, agent $w_i$ gets bundle $W_i$ of partition $P_{j,2}$ for $f_i-1$ different values of $j\not\in C$ so that $e_i\in S_j$, i.e., with probability $\frac{f_i-1}{t}$. Agent $z_i$ has total utility $\frac{f_i-1}{t}\cdot (1+1/f_i)/f_i < 1/t$ for the bundles assigned to agent $w_i$. Thus, indeed, for each $i\in [r]$, agent $z_i$ does not envy agents $v_i$ and $w_i$.

To prove Pareto optimality, notice that the base agents get their maximum expected utility of $R+R/t$, the set agents get their maximum expected utility of $Q$, and the element agents $v_i$ and $w_i$ for $i\in [r]$ get their maximum expected utility of $6$. Then, any allocation in which the utility of some agent $z_i$ increases should harm some other element agent. This concludes the proof of the lemma.
\end{proof}

\section{Conclusion}

In this work, we considered the general setting of the problem of dividing indivisible items in a fair and efficient manner to agents having partition-based utilities. We have shown membership of the total problem of finding ex-ante envy-free and Pareto-optimal allocation lotteries in the class \PPAD. We consider settling the precise computational complexity of the problem an important question. From an algorithmic perspective it would also be very interesting to see if Lemke's algorithm~\cite{Lemke1964-Bimatrix} could be adapted to solve the problem, as this would likely lead to a practical algorithm.



\bibliography{ref}
\appendix
\section{Proof of Lemma~\ref{LEM:WeightInequalityImpliesEnvyFree}} \label{app:413}
\begin{proof}
  By the Birkhoff–von~Neumann theorem, for any $k$, there exists a
  probability distribution $\{\alpha^k_\pi\}_{\pi \in S_n}$ over
  permutations on $[n]$ such that
  \begin{equation}
    q_{ij}^k = p_k \sum_{\substack{\pi \in S_n \\ \pi(i)=j}} \alpha^k_\pi \text{ , for all } i,j,k.
    \label{EQ:Birkhoff-q}
  \end{equation}
  From $\q$, define the lottery $\hat\q$ by
  \[
    \hat{q}_{ij}^k = \begin{cases} q_{hj}^k & \text{ if } i=l\\q_{lj}^k & \text{ if } i=h\\q_{ij}^k & \text{ if } i\notin\{l,h\}
    \end{cases}.
  \]
and note that $u_l(\hat\q;h) = u_l(\q;l)$, $u_h(\hat\q;l) = u_h(\q;h)$,
whereas $u_i(\hat\q;j) = u_i(\q;j)$ when
$(i,j) \notin \{(l,l),(l,h),(h,l),(h,h)\}$.

For any permutation $\pi \in S_n$, denote by $\hat{\pi}$ the permutation given by
\[
  \hat\pi(i) = \begin{cases} \pi(h) & \text{ if } i=l\\ \pi(l) & \text{ if } i=h\\ \pi(i) & \text{ if } i \notin \{l,h\}
  \end{cases}.
\]
In other words, $\hat{\pi}$ just swaps the images of $l$ and $h$, respectively, compared to $\pi$. We can now observe that
  \[
    \hat{q}_{ij}^k = p_k \sum_{\substack{\pi \in S_n \\ \pi(i)=j}} \alpha^k_{\hat\pi} \text{ , for all } i,j,k \enspace .
  \]

  Since $(\q,p)$ is an optimal solution of
  LP~(\ref{EQ:MaxWeightedSumLP}) and $(\hat\q,p)$ is a feasible  solution, we have that
  $\sum_{i=1}^n w_i u_i(\q;i) \geq \sum_{i=1}^n w_i
  u_i(\hat\q;i)$. Using Equation~\ref{EQ:Birkhoff-q} we can rewrite the two sides of the inequality as
\[
  \sum_{i=1}^n w_i u_i(\q;i) = \sum_{k=1}^mp_k\sum_{\pi\in S_n}\alpha_\pi^k \sum_{i=1}^n w_i u_{i\pi(i)}^k
\]
and
\[
  \sum_{i=1}^n w_i u_i(\hat\q;i) = \sum_{k=1}^mp_k\sum_{\pi\in S_n}\alpha_{\hat\pi}^k \sum_{i=1}^n w_i u_{i\pi(i)}^k = \sum_{k=1}^mp_k\sum_{\pi\in S_n}\alpha_\pi^k \sum_{i=1}^n w_i u_{i\hat\pi(i)}^k,
\]
respectively. Optimality of $\q$ implies that whenever
$p_k\alpha_\pi^k>0$, we have
\[
  \sum_{i=1}^n w_i u_{i\pi(i)}^k \geq \sum_{i=1}^n w_i u_{i\hat\pi(i)}^k,
\]
or equivalently,
\[
  w_l u_{l\pi(l)}^k + w_h u_{h\pi(h)}^k \geq w_l u_{l\pi(h)}^k + w_h u_{h\pi(l)}^k,
\]
which may be rewritten as
\begin{equation}
  w_h(u_{h\pi(h)}^k - u_{h\pi(l)}^k) \geq w_l(u_{l\pi(h)}^k - u_{l\pi(l)}^k).
  \label{EQ:weight-envy-inequality}
\end{equation}
Suppose for contradiction that $u_{l\pi(h)}^k > u_{l\pi(l)}^k$. Then
since $w_l>0$ we also have that $u_{h\pi(h)}^k > u_{h\pi(l)}^k$. By Definition~\ref{DEF:rho} this
means that $(k,l,\pi(l),h,\pi(h)) \in J$ and thus
\[
  \rho < (u_{l\pi(h)}^k - u_{l\pi(l)}^k)/(u_{h\pi(h)}^k - u_{h\pi(l)}^k).
\]
From Equation~(\ref{EQ:weight-envy-inequality}) we then obtain
$w_h > \rho w_l$, contradicting the assumption that
$w_h \leq \rho w_l$.

Since this holds for $k$ and $\pi$ we can finally conclude that
\[
  u_l(\q;l) = \sum_{k=1}^m p_k\sum_{\pi\in S_n}\alpha_\pi^k u_{l\pi(l)}^k \geq \sum_{k=1}^m p_k\sum_{\pi\in S_n}\alpha_\pi^k u_{l\pi(h)}^k = u_l(\q;h). \qedhere
\]
\end{proof}

\section{Solving optimization and feasibility problems using the linear-OPT-gate}
\label{app:linearOPT}
In this section we give precise statements of the optimization and feasibility problems that are solvable by the linear-OPT-gate of Filos-Ratsikas~et~al.~\cite{Filos-RatsikasH2023-PPAD}.

\subsection{Optimization problems}
The linear-OPT gate is a construction of a piecewise linear arithmetic pseudo-circuit. It is parametrized (meaning that these are fixed constants) by the following:
\begin{itemize}
    \item Numbers $n,m,k \in \NN$.
    \item A (rational) matrix $A \in \RR^{m\times n}$.
    \item A piecewise linear arithmetic circuit $G_{\partial f} \colon \RR^n \times \RR^k \times [0,1]^\ell \to \RR \times [0,1]^\ell$.
\end{itemize}
It takes as input (meaning that these are given as input variables) by the following:
\begin{itemize}
    \item Vectors $b \in \RR^m$ and $c \in \RR^k$.
    \item A number $R \in \RR$.
\end{itemize}
The linear-OPT gate computes an optimal solution of the following optimization problem $\mathcal{C}$ in decision variables $x \in \RR^n$:
\begin{center}\underline{Optimization Program $\mathcal{C}$}\end{center}
\begin{equation}\label{eq:OPT-gate-general}
\begin{split}
\min \quad &f(x;c) \\
\text{ s.t.} \quad & Ax \leq b\\
& x \in [-R,R]^n
\end{split}
\end{equation}
whenever the following conditions hold:
\begin{itemize}
\item The feasible domain $\{x \in [-R,R]^n : Ax \leq b\}$ is not empty.
\item The map $x \mapsto f(x;c)$ is a convex function on the feasible domain and its subgradient is given by the pseudo-circuit $G_{\partial f}$.
\end{itemize}

\subsection{Feasibility problems}
Using the linear-OPT, a piecewise linear arithmetic pseudo-circuit solving feasibility problems with conditional constraints can be constructed. It is parametrized (meaning that these are fixed constants) by the following:
\begin{itemize}
    \item Numbers $n,m,k \in \NN$.
    \item A (rational) matrix $A \in \RR^{m\times n}$.
    \item Piecewise linear circuit arithmetic circuits $h_i \colon \RR^k \to \RR$, for $i=1,\dots,m$.
\end{itemize}
It takes as input (meaning that these are given as input variables) by the following:
\begin{itemize}
    \item Vectors $b \in \RR^m$ and $y \in \RR^k$.
    \item A number $R \in \RR$.
\end{itemize}
The pseudo-circuit outputs a feasible solution of the following feasibility problem $\mathcal{Q}$ in decision variables $x \in \mathbb{R}^n$:
\begin{center}\underline{Feasibility Program $\mathcal{Q}$}\end{center}
\begin{equation}\label{eq:feasibility-general}
\begin{split}
h_i(y) > 0 \implies a_i^\mathsf{T} x \leq b_i\\
x \in [-R,R]^n
\end{split}
\end{equation}
whenever it is feasible. Note also that ordinary inequality constraints are a special case of conditional linear constraints, obtained by setting $h_i(y) = 1$ above.

\end{document}